\documentclass[11pt,a4paper]{article}
\usepackage{times,amsmath,amssymb,amsthm}
\usepackage{bm,marvosym}
\usepackage{color,booktabs}
\usepackage{pgf,pgfarrows,pgfnodes}
\usepackage{tikz,pifont,pgflibraryplotmarks}
\usepgflibrary{shapes.symbols}
\usepackage{fancybox,xkeyval,xcolor,geometry}
\usepackage[pdftex]{hyperref}
\hypersetup{plainpages=True, pdfstartview=FitV,
colorlinks=true,linkcolor=blue,citecolor=blue}

\def\v#1{{\mathbf{#1}}}
\def\mb#1{{\mathbb{#1}}}

\newcommand{\norm}[1]{\left|\!\left|#1\right|\!\right|_1}

\addtolength{\textwidth}{1in}
\addtolength{\textheight}{1in}
\addtolength{\oddsidemargin}{-0.5in}
\addtolength{\topmargin}{-0.6in}

\newtheorem{cor}{Corollary}
\newtheorem{thm}{Theorem}
\newtheorem{lmm}{Lemma}

\title{\bf Simple, efficient maxima-finding algorithms
for \\ multidimensional samples}
\author{{\sc Wei-Mei Chen}\\
    Department of Electronic Engineering\\
    National Taiwan University of Science and Technology\\
    Taipei 106\\
    Taiwan
\and {\sc Hsien-Kuei Hwang}\\
    Institute of Statistical Science\\
    Academia Sinica\\
    Taipei 115\\
    Taiwan
\and {\sc Tsung-Hsi Tsai}\\
    Institute of Statistical Science\\
    Academia Sinica\\
    Taipei 115\\
    Taiwan
}
\date{\today}

\begin{document}
\maketitle

\begin{abstract}
New algorithms are devised for finding the maxima of
multidimensional point samples, one of the very first problems
studied in computational geometry. The algorithms are very simple
and easily coded and modified for practical needs. The expected
complexity of some measures related to the performance of the
algorithms is analyzed. We also compare the efficiency of the
algorithms with a few major ones used in practice, and apply our
algorithms to find the maximal layers and the longest common
subsequences of multiple sequences.
\end{abstract}

\noindent \emph{Key words.}
Maximal points, computational geometry, Pareto
optimality, sieve algorithms, dominance, multi-objective
optimization, skyline, average-case analysis of algorithms.

\section{Introduction}

A point $\v{p}\in \mb{R}^d$ is said to \emph{dominate} another point
$\v{q}\in\mb{R}^d$ if the coordinatewise difference $\v{p}-\v{q}$
has only nonnegative coordinates and $\v{p}-\v{q}$ is not
identically a zero vector, where the dimensionality $d\ge1$. For
convenience, we write $\v{q}\prec \v{p}$ or $\v{p}\succ \v{q}$. The
non-dominated points in a sample are called the \emph{maxima} or
\emph{maximal points} of that sample. Note that there may be two
identical points that are both maxima according to our definition of
dominance. Since there is no total order for multidimensional points
when $d>1$, such a dominance relation among points has been one of
the simplest and widely used partial orders. We can define dually
the corresponding \emph{minima} of the sample by reversing the
direction of the dominance relation.

\subsection{Maxima in diverse scientific disciplines}

Daily lives are full of tradeoffs or multi-objective decision
problems with often conflicting factors; the numerous terms appeared
in different scientific fields reveal the importance and popularity
of maxima in theory, algorithms, applications and practice: maxima
(or vector maxima) are sometimes referred to as \emph{nondominance},
\emph{records}, \emph{outer layers}, \emph{efficiency}, or
\emph{noninferiority} but are more frequently known as \emph{Pareto
optimality}  or \emph{Pareto efficiency} (with the natural
derivative \emph{Pareto front}) in econometrics, engineering,
multi-objective optimization, decision making, etc. Other terms used
with essentially the same denotation include \emph{admissibility} in
statistics, \emph{Pareto front} (and the corresponding notion of
\emph{elitism}) in evolutionary algorithms, and \emph{skyline} in
database language; see \cite{BHLT01,CHT03,Devroye83,Devroye93} and
the references therein and the books \cite{CVL07,Deb01,Ehrgott00}
for more information. They also proved useful in many computer
algorithms and are closely related to several well-known problems,
including convex hulls, top-$k$ queries, nearest-neighbor search,
largest empty rectangles, minimum independent dominating set in
permutation graphs, enclosure problem for rectilinear $d$-gon,
polygon decomposition, visibility and illumination, shortest path
problem, finding empty slimplices, geometric containment problem,
data swapping, grid placement problem, and multiple longest common
subsequence problem to which we will apply our algorithms later; see
\cite{CHT03,PS85} for more references.

We describe briefly here the use of maxima in the contexts of
database language and multi-objective optimization problems using
evolutionary algorithms.

Skylines in database queries are nothing but minima. A typical
situation where the skyline operator arises is as follows; see
\cite{BKS01} for details. Travelers are searching over the Internet
for cheap hotels that are near the beach in, say C\^ote d'Azur.
Since the two criteria ``lower price" and ``shorter distance" are
generally conflicting with each other and since there are often too
many hotels to choose from, one is often interested in those hotels
that are non-dominated according to the two criteria; here dominance
is defined using minima. Much time will be saved if the search or
sort engine can automatically do this and filter out those that are
dominated for database queriers (by, say clicking at the skyline
operator). On the other hand, frequent spreadsheet users would also
appreciate such an operator, which can find the maxima, minima or
skyline of multidimensional data by simple clicks.

In view of these and many other natural applications such as
e-commerce, multivariate sorting and data visualization, the
skylines have been widely and extensively addressed in recent
database literature, notably for low- and moderate-dimensional data,
following the pioneering paper \cite{BKS01}. In addition to devising
efficient skyline-finding algorithms, other interesting issues
include top-$k$ representatives, progressiveness, absence of false
hits, fairness, incorporation of preference, and universality. A
large number of skyline-finding algorithms have been proposed for
various needs; see, for example, \cite{BCP08,BKS01, GSG07,
KRR02, PTFS05, TEO01} and the references therein.

On the other hand, an area receiving even much more recent attention
is the study of multi-objective evolutionary algorithms (MOEAs),
where the idea of maxima also appeared naturally in the form of
non-dominated solutions (or elites). MOEAs provide a popular
approach for multi-objective optimization, which identify the most
feasible solutions lying on the Pareto front under various (often
conflicting) constraints by repeatedly finding non-dominated
solutions based on biological evolutionary mechanisms. These
algorithms have turned out to be extremely fruitful in diverse
engineering, industrial and scientific areas, as can be witnessed by
the huge number of citations many papers on MOEA have received so
far. Some popular schemes in this context suggested the maintenance
of an explicit archive/elite for all non-dominated solutions found
so far; see below and \cite{FES03,KC00,ZDT00,ZT99} and the
references therein. See also \cite{CC06} for an interesting
historical overview.

Finally, maxima also arises in a random model for river
networks (see \cite{BLP06,BR04}) and in an interesting statistical
estimate called ``layered nearest neighbor estimate" (see
\cite{BD08}).

\subsection{Maxima, maximal layers and related notions}

Maxima are often used for some ranking purposes or used as a
component problem for more sophisticated situations. Whatever the
use, one can easily associate such a notion to define
multidimensional sorting procedures. One of the most natural ways is
to ``peel off'' the current maxima, regarded as the first-layer
maxima, and then finding the maxima of the remaining points,
regarded then as the second-layer maxima, and so on until no point
is left. The total number of such layers gives rise to a natural
notion of \emph{depth}, which is referred to as the \emph{height} of
the corresponding random, partially ordered sets in \cite{BW88}.
Such a maximal-layer depth is nothing but the length of the longest
increasing subsequences in random permutations when the points are
selected uniformly and independently from the unit square, a problem
having attracted widespread interests, following the major
breakthrough paper \cite{BDJ99}.

On the other hand, the maximal layers are closely connected to
chains (all elements comparable) and antichains (all elements
incomparable) of partially ordered set in order theory, an
interesting result worthy of mention is the following dual version
of Dilworth's theorem, which states that the size of the largest
chain in a finite partial order is equal to the smallest number of
antichains into which the partial order may be partitioned; see, for
example, \cite{Kaldewaij87} for some applications.

In addition to these aspects, \emph{maximal layers} have also been
widely employed in multi-objective optimization applications since
the concept was first suggested in Goldberg's book \cite{G89}. Based
on identifying the maximal layer one after another, Srinivas and Deb
\cite{SD95} proposed the non-dominated sorting genetic algorithm
(NSGA) to simultaneously find multiple Pareto-optimal points, which
was later on further improved in \cite{Deb02}, reducing the time
complexity from $O(dn^3)$ to $O(dn^2)$. (This paper has soon become
highly-cited.) Jensen \cite{Jensen03} then gave a divide-and-conquer
algorithm to find the maximal layers with time complexity $(n(\log
n)^{d-1})$; see Section~\ref{sec-app} for more details.

In the contexts of multi-objective optimization problems, elitism
usually refers to the mechanism of storing some obtained
non-dominated solutions into an external archive during the process
of MOEAs because a non-dominated solution with respect to its
current data is not necessarily non-dominated with respect to the
whole feasible solutions. The idea of elitism was first introduced
in \cite{ZT99} and is regarded as a milestone in the development of
of MOEAs \cite{CC06}. Since the effectiveness of this mechanism
relies on the size of the external non-dominated set, an elite
archive with limited size was suggested to store the truncated
non-dominated sets \cite{KC00,ZT99}, so as to avoid the
computational costs of maintaining all non-dominated sets.
Nevertheless, restricting the size of archive reduces the quality of
solutions; more efficient storages and algorithms are thus studied
for unconstrained elite archives; see for example
\cite{FES03,Jensen03,MTT02}.

\subsection{Aim and organization of this paper}

Due to the importance of maxima, a large number of algorithms for
finding them in a given sample of points have been proposed and
extensively studied in the literature, and many different design
paradigms were introduced including divide-and-conquer, sequential,
bucket or indexing, selection, and sieving; see \cite{CHT03} for a
brief survey. Quite naturally, practical algorithms often merge more
than one of the design paradigms for better performance.

Despite the huge number of algorithms proposed in the literature,
there is still need of simpler and practically efficient algorithms
whose performance does not deteriorate too quickly in massive point
samples as the number of maximal points grows, a property which we
simply refer to as ``scalable". This is an increasingly important
property as nowadays massive data sets or data streams are becoming
ubiquitous in diverse areas.

Although for most practical ranking and selecting purposes, the
notion of maxima is most useful when the number of maxima is not too
large compared with the sample size, often there is no a priori
information on the number of maxima before computing them.

Furthermore, a general-purposed algorithm may in practice face the
situation of data samples with very large standard deviation for
their maxima. From known probabilistic theory of maxima (see
\cite{BDHT05} and the references therein), the expected number of
maxima and the corresponding variance can in two typical random
models grow either in $O((\log n)^{d-1})$ when the coordinates are
roughly independent or in $O(n^{1-1/d})$ when the coordinates are
roughly negatively dependent, both $O$-terms here referring to large
$n$, the sample size, and fixed $d$, the dimensionality. In
particular, in the planar case, there can be $\sqrt{n}$ number of
maxima on average for roughly negatively correlated coordinates, in
contrast to $\log n$ for independent coordinates; see also
\cite{Baryshnikov07,Golin93} for the ``gap theorem" and
\cite{Devroye94} for a similar $\sqrt{n}$ vs $\log n$ effect
(reflecting dependence or independence) on random Cartesian trees.
Since the maximal points can be very abundant with large standard
deviations, more efficient and more uniformly scalable algorithms
are needed.

We propose in this paper two new techniques to achieve scalability:
the first technique is to reduce the maxima-finding to a two-phase
records-finding procedure, giving rise to a no-deletion algorithm,
which largely simplifies the design and maintenance of the data
structure used. The second technique is the introduction of bounding
box in the corresponding tree structure for storing the current
maxima, which reduces significantly the deterioration of efficiency
in higher dimensions. The combined use of both techniques on $k$-d
trees turns out to be very efficient, easily coded and outperforms
many known efficient algorithms. Some preliminary results on the use
of $k$-d trees for finding maxima of appeared in \cite{CL07}.

This paper is organized as follows. In the next section, we briefly
describe some existing algorithms proposed in the diverse
literature, focusing on the two most popular and representative
paradigms: divide-and-conquer and sequential.
Section~\ref{sec-new-algo} gives details of the new techniques,
implementation on $k$-d trees, and diverse aspects of further
improvements. A comparative discussion will also be given with major
known algorithms. Analytic and empirical aspects of the performance
of the algorithms will be discussed in Section~ \ref{sec-analysis}.
Finally, we apply our new algorithm to the problems of finding
maximal layers and that of finding multiple longest common
subsequence in Section~\ref{sec-app}, where the efficiency of our
algorithm is tested on several data sets.

\emph{Throughout this paper, $\v{Max}(\v{p})$ always denotes the
maxima of the sequence of points $\v{p}=\{\v{p}_1,\dots,\v{p}_n\}$.}

\section{Known maxima-finding algorithms---a brief account}

In view of the large amount of algorithms with varying characters
appeared in the literature, it is beyond the scope of this paper to
provide a full description of all existing algorithms. Instead, we
give a brief account here on divide-and-conquer and sequential
algorithms; see \cite{CHT03} and the references there for other
algorithms.

\subsection{Divide-and-conquer algorithms}

Divide-and-conquer algorithms were first proposed by Kung et al.\
\cite{KLP75} with the worst-case time complexity of order $n(\log
n)^{d-2 +\delta_{d,2}}$ for dimensionality $d\ge2$, where $n$ is the
number of points and $\delta_{a,b}$ denotes the Kronecker delta
function. Bentley \cite{Bentley80} schematized a multidimensional
divide-and-conquer paradigm, which in particular is applicable to
the maxima-finding problem with the same worst-case complexity.
Gabow et al.\ \cite{GBT84} later improved the complexity to
$O(n(\log n)^{d-3}\log \log n)$ for $d\ge4$ by scaling techniques.
Output-sensitive algorithms with complexity of order $n(\log
(M+1))^{d-2 +\delta_{d,2}}$ were devised in \cite{KS85}, where $M$
denotes the number of maxima.

The typical pattern of most of these algorithms is as follow.
\begin{quote}
\textbf{Algorithm}
\textsf{Divide-and-conquer} \\
//\textbf{Input}: A sequence of points $\v{p}=
\{\v{p}_1,\ldots, \v{p}_n\}$ in $\mb{R}^d$\\
//\textbf{Output}: $\v{Max}(\v{p})$\\
\textbf{begin}\\%
\hspace*{.5cm} \textbf{if } $n\le 1$ \textbf{then}
    \textbf{return}$(\{\v{p}_1,\dots,\v{p}_n\})$ \\%
\hspace*{.5cm} \textbf{else} \textbf{return}
\textsf{Filter-out-false-maxima}(\textsf{Divide-and-conquer}($\{
\v{p}_1,\dots,\v{p}_{\lfloor n/2\rfloor} \}$),\\
\hspace*{5.9cm} \textsf{Divide-and-conquer}($\{\v{p}_{\lfloor
n/2\rfloor+1},\dots,\v{p}_n\}$)\\
\textbf{end}
\end{quote}
Here \textsf{Filter-out-false-maxima}($\v{p},\v{q}$) drops maxima in
$\v{q}$ that are dominated by maxima in $\v{p}$.

These divide-and-conquer algorithms are generally characterized by
their good theoretic complexity in the worst case, simple structural
decompositions in concept but low competitiveness in practical and
typical situations with sequential algorithms, although it is known
that most divide-and-conquer algorithms have linear expected-time
performance under the usual hypercube random model, or more
generally when the expected number of maxima is of order
$o(n^{1-\varepsilon})$; see \cite{Devroye83,FG93}. Variants of them
have however been adapted in the skyline and evolutionary
computation contexts; see for example \cite{PTFS05} for skylines and
\cite{Jensen03} for MOEAs.

\subsection{Sequential algorithms}

The most widely-used procedure for finding non-dominated points in
multidimensional samples has the following incremental, on-line,
one-loop pattern (see \cite{BCL93,KLP75}).

\begin{quote}
\textbf{Algorithm} \textsf{Sequential}\\
//\textbf{Input}: A sequence of points
$\v{p}=\{\v{p}_1,\dots,
\v{p}_{n}\}$ in $\mb{R}^d$\\
//\textbf{Output}: $\v{Max}(\v{p})$\\
\textbf{begin}\\%
\hspace*{1cm} $\v{M}:=\{\v{p}_1\}$\qquad
//$\v{M}:$ a data structure for storing the current maxima\\
\hspace*{1cm} \textbf{for} $i:=2$ \textbf{to} $n$ \textbf{do}\\
\hspace*{2cm} \textbf{if} no point in $\v{M}$ dominates
    $\v{p}_i$ \textbf{then} \qquad //updating $\v{M}$\\
\hspace*{3cm} delete $\{\mathbf{q\in M\,:\,q\prec p}_i\}$ from
    $\v{M}$\\%
\hspace*{3cm} insert $\v{p}_i$ into $\v{M}$ \\
\textbf{end}
\end{quote}

The algorithm is a natural adaptation of the one-dimensional
maximum-finding loop, which represents the very first algorithm
analyzed in details in Knuth's \emph{Art of Computer Programming}
books \cite{Knuth97}. It runs by comparing points one after another
with elements in the data structure $\v{M}$, which stores the maxima
of all elements seen so far, called \emph{left-to-right maxima} or
\emph{records}; it moves on to the next point $\v{p}_{i+1}$ if the
new point $\v{p}_i$ is dominated by some element in $\v{M}$, or it
removes elements in $\v{M}$ dominated by the new point $\v{p}_i$ and
accepts the new point $\v{p}_i$ into $\v{M}$.

For dimensions $d\ge2$, such a simple design paradigm was first
proposed in \cite{KLP75} (with an additional pre-sorting stage for
one of the coordinates) and the complexity was analyzed for $d=2$
and $d=3$. To achieve optimal worst-case complexity for $d=3$, they
used AVL-tree (a simple, balanced variant of binary search tree).
The simpler implementation using a linear list (and without any
pre-sorting procedure) was discussed first in the little known paper
\cite{Habenicht83} and later in greater detail in \cite{BCL93}, in
particular with the move-to-front self-adjusting heuristic.

The \textsf{Sequential} algorithm, also known as block-nested-loop
algorithm \cite {PTFS05}, is most efficient when the number of
maxima is a small function of $n$ such as powers of logarithm, but
deteriorates rapidly when the number of maxima is large. In addition
to list employed in \cite{BCL93,Habenicht83} to store the maxima for
sequential algorithms, many varieties of tree structures were also
proposed in the literature: quad trees in \cite{Habenicht83,MTT02},
R-trees in \cite{KRR02}, and $d$-ary trees in \cite{Schutze03}; see
also \cite{PTFS05}. But these algorithms become less efficient (in
time bound and in space utilization) as the dimensionality of data
increases, also the maintenance is more complicated. We will see
that the use of $k$-d trees is preferable in most cases; see also
\cite{CHT03} for the use of binary search trees for $d=2$.

\section{A two-phase sequential algorithm based on $k$-d trees using
bounding boxes} \label{sec-new-algo}

We present our new algorithm based on the ideas of multidimensional
non-dominated records, bounding boxes, and $k$-d trees. Further
refinements of the algorithm will also be discussed. We then compare
our algorithm with a few major ones discussed in the literature.

\subsection{The design techniques}

We introduce in this subsection multidimensional non-dominated
records, $k$-d trees and bounding boxes, and will apply them later
for finding maxima. In practice, each of these techniques can be
incorporated equally well into other techniques for finding maxima.

\subsubsection{Multidimensional non-dominated records}

Except for simple data structures such as list, the deletion
performed in algorithm \textsf{Sequential} is often the most
complicated step as it requires a structure re-organization after
the removal of the dominated elements. It is then natural to see if
there are algorithms avoiding or reducing deletions.

Note that in the special case when $d=1$, the two steps ``deletion"
and ``insertion" in algorithm \textsf{Sequential} actually reduce to
one, and the inserted elements are nothing but the records (or
record-breaking elements, left-to-right maxima, outstanding
elements, etc.). Recall that an element $\v{p}_j$ in the sequence of
reals $\{\v{p}_1,\dots,\v{p}_n\}$ is called a record if $\v{p}_j$ is
not dominated by any element in $\{\v{p}_1,\dots,\v{p}_{j-1}\}$.

The crucial observation is then based on extending the
one-dimensional records to higher dimensions.

\smallskip

\noindent \textbf{Definition ($d$-dimensional non-dominated
records).} A point $\v{p}_j$ in the sequence of points in $\mb{R}^d$
$\{\v{p}_1,\dots, \v{p}_n\}$ is said to be a \emph{$d$-dimensional
non-dominated record} of the sequence $\{\v{p}_1,\ldots ,\v{p}_n\}$
if $\v{p}_j$ is not dominated by $\v{p}_i$ for all $1\le i<j$. We
also define $\v{p}_1$ to be a non-dominated record.

\smallskip

Such non-dominated records are called ``weak records" in
\cite{Gnedin07}, but this term seems less informative; see also
\cite{Devroye93} for a different use of records. \emph{For
simplicity, we write, throughout this paper, records to mean
non-dominated records when no ambiguity will arise.}

For convenience, write $\v{Rec}(\v{p})$ as the set of records of
${\v{p}}=\{\v{p}_1,\ldots, \v{p}_n\}$.

\begin{lmm} For any given set of points
$\{\v{p}_1,\dots,\v{p}_n\}$, \label{lmm}
\[
    \v{Max}(\{\v{p}_1,\dots,\v{p}_n\})
    = \v{Rec}(\overline{\v{Rec}
    (\{\v{p}_1,\dots,\v{p}_n\})}),
\]
where $\overline{\{\v{q}_1,\dots,\v{q}_k\}}
:=\{\v{q}_k,\dots,\v{q}_1\}$ denotes the reversed sequence.
\end{lmm}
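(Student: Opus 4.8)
\textbf{Proof proposal.} The plan is to unwind the right-hand side into a concrete combinatorial description and then prove two inclusions. Write $\v{Rec}(\{\v{p}_1,\dots,\v{p}_n\}) = \{\v{p}_{i_1},\dots,\v{p}_{i_M}\}$ with $i_1<\cdots<i_M$; then the reversed sequence is $\{\v{p}_{i_M},\dots,\v{p}_{i_1}\}$, so $\v{p}_{i_a}$ lies in $\v{Rec}(\overline{\v{Rec}(\v{p})})$ exactly when $\v{p}_{i_a}$ is not dominated by $\v{p}_{i_b}$ for any $b>a$. Hence the identity to be proved is equivalent to: \emph{a point is maximal if and only if it is a record that is dominated by no later record.}

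I would first set down two elementary facts. (i) The coordinatewise relation is transitive, and if $\v{a}\prec\v{b}$ while either $\v{b}\prec\v{c}$ or $\v{b}=\v{c}$, then $\v{a}\prec\v{c}$; both are immediate from the definition. (ii) By finiteness of the sample, every point $\v{p}_j$ satisfies $\v{p}_j\prec\v{p}_m$ or $\v{p}_j=\v{p}_m$ for some maximal $\v{p}_m$: start at $\v{p}_j$ and repeatedly pass to a strictly dominating point; since $\prec$ is a strict partial order no point can recur, so on a finite set the chain terminates at a maximal point. I would also note that a maximal point is automatically a record, since being dominated by no point it is in particular dominated by no earlier point — and here the possibility of equal maxima causes no difficulty, because equal points do not dominate each other.

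Then the inclusion $\v{Max}(\v{p})\subseteq\v{Rec}(\overline{\v{Rec}(\v{p})})$ is immediate: a maximal point is a record (by the note above) and is dominated by no point whatsoever, hence by no later record. For the reverse inclusion, suppose $\v{p}_j$ is a record that is dominated by no later record but is not maximal. By fact (ii) it is strictly dominated by some maximal point $\v{p}_m$, which is a record; since $\v{p}_j$ is a record it is not dominated by any $\v{p}_i$ with $i<j$, and $\v{p}_m\neq\v{p}_j$, which forces $m>j$; so $\v{p}_j$ is dominated by a later record, a contradiction. I do not anticipate a hard step here: the proof is short, and the only points demanding care are the bookkeeping around equal points (strict versus non-strict dominance) and the explicit appeal to finiteness in the chain argument of fact (ii).
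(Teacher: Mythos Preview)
Your proof is correct and follows essentially the same idea as the paper's: the heart of both arguments is the contradiction showing that a record-of-reversed-records $\v{p}_j$ cannot be dominated by another point $\v{p}_i$, splitting on whether $i<j$ (so $\v{p}_j$ would fail to be a record) or $i>j$ (so $\v{p}_j$ would fail the second pass). You are more explicit than the paper in separating the two inclusions and in spelling out the auxiliary facts (transitivity, and that every non-maximal point is dominated by some maximal one), whereas the paper compresses everything into a single antichain argument on the right-hand side and leaves the inclusion $\v{Max}(\v{p})\subseteq\v{Rec}(\overline{\v{Rec}(\v{p})})$ implicit.
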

In words, if $\{\v{q}_1,\dots ,\v{q}_k\}$ represents the records of
the sequence $\{\v{p}_1,\dots ,\v{p}_n\}$, then the maxima of
$\{\v{p}_1,\dots ,\v{p}_n\}$ is equal to the records of the sequence
$\{\v{q}_k,\v{q}_{k-1},\dots ,\v{q}_1\}$.
\begin{proof}
We prove by contradiction. Assume that there are two points
$\v{p}_i$ and $\v{p}_j$ in the set
\[
    \v{Rec}(\overline{\v{Rec}
    (\{\v{p}_1,\dots,\v{p}_n\})})
\]
such that $\v{p}_i\succ\v{p}_j$. If $i<j$, then $\v{p}_j$ cannot be
a record and thus cannot be a member of the set
$\v{Rec}(\{\v{p}_1,\dots,\v{p}_n\})$, a contradiction. On the other
hand, if $i>j$, then $ \v{p}_j$ is a record and is included in the
set $\v{Rec}(\{\v{p}_1,\dots,\v{p}_n\})$, but then after the order
being reversed, it cannot be a record since it is dominated by
$\v{p}_i$, again a contradiction.
\end{proof}

Another interesting property regarding the connection between
records and maxima is the following.
\begin{cor} \label{cor1} In algorithm \textsf{Sequential}
for finding maxima, the points $\v{p}_i$ to be inserted in the
for-loop are necessarily the records, while those deleted are
records but not maxima.
\end{cor}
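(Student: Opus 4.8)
The plan is to run everything off the standard loop invariant for algorithm \textsf{Sequential}: at the start of the $i$-th pass of the for-loop, $\v{M}=\v{Max}(\{\v{p}_1,\dots,\v{p}_{i-1}\})$. I would first state this invariant (it is the known correctness property of \textsf{Sequential}) and indicate the one-line induction on $i$ that proves it, splitting into the two cases according to whether $\v{p}_i$ is dominated by some element of the current $\v{M}$; the only ingredients are transitivity of $\succ$ and the elementary fact that every point of a finite set is either a maximum of that set or is dominated by some maximum of it. Both halves of the corollary then follow immediately.

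For the insertion part, suppose $\v{p}_i$ is inserted in pass $i$. By the test in the algorithm, no point of $\v{M}=\v{Max}(\{\v{p}_1,\dots,\v{p}_{i-1}\})$ dominates $\v{p}_i$. If some earlier point $\v{p}_k$ with $k<i$ dominated $\v{p}_i$, pick (by the elementary fact) a maximum $\v{q}$ of $\{\v{p}_1,\dots,\v{p}_{i-1}\}$ with $\v{q}\succeq\v{p}_k$; transitivity yields $\v{q}\succ\v{p}_i$ with $\v{q}\in\v{M}$, contradicting the test. Hence $\v{p}_i$ is dominated by no earlier point, i.e.\ it is a record. (The converse also holds, so the inserted points are exactly the records, but only this direction is asserted.)

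For the deletion part, a point $\v{q}$ removed in pass $i$ satisfies $\v{q}\in\v{M}=\v{Max}(\{\v{p}_1,\dots,\v{p}_{i-1}\})$ and $\v{q}\prec\v{p}_i$. Writing $\v{q}=\v{p}_j$ with $j\le i-1$: since $\v{q}$ is a maximum of $\{\v{p}_1,\dots,\v{p}_{i-1}\}$ it is dominated by none of $\v{p}_1,\dots,\v{p}_{i-1}$, in particular by none of $\v{p}_1,\dots,\v{p}_{j-1}$, so $\v{q}$ is a record of $\{\v{p}_1,\dots,\v{p}_n\}$; and $\v{q}\prec\v{p}_i$ exhibits a point dominating $\v{q}$, so $\v{q}$ is not a maximum of $\{\v{p}_1,\dots,\v{p}_n\}$. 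This gives the corollary.

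The work is essentially all in the invariant; the rest is immediate. The only place to be slightly careful — both in the induction and in the chain $\v{q}\succeq\v{p}_k\succ\v{p}_i$ above — is the ``not identically zero'' clause in the definition of dominance, so that composing a ``$\succeq$'' with a strict ``$\succ$'' genuinely yields a strict ``$\succ$'' and the degenerate case $\v{q}=\v{p}_i$ cannot slip in; this is harmless here since $\v{p}_k\succ\v{p}_i$ already forbids $\v{p}_i\succeq\v{p}_k$.
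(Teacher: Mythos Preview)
Your proof is correct. The paper states this corollary without proof, treating it as an immediate observation from the definitions and the structure of algorithm \textsf{Sequential}; your argument via the loop invariant $\v{M}=\v{Max}(\{\v{p}_1,\dots,\v{p}_{i-1}\})$ is the natural way to make that observation rigorous and supplies exactly the details the paper leaves implicit.
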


\subsubsection{A two-phase sequential algorithm}

Lemma~\ref{lmm} provides naturally a two-phase, no-deletion
algorithm for finding maxima: in the first phase, we identify the
records, and in the second phase, we find the records of the
reversed sequence of the output of the first phase (so as to remove
the records that are not maxima); an example of seven planar points
is given in Figure~\ref{ex1}. In other terms, we perform only the
insertion in algorithm \textsf{Sequential} in the first phase,
postponing the deletion to be carried out in the second.

\begin{figure}[tbp]
\begin{minipage}{0.48\textwidth}
\centering
\begin{tikzpicture} [scale=0.5]
\draw [->,line width=1pt] (0,0)--(10,0) node[right]{$x$};%
\draw [->,line width=1pt] (0,0)--(0,10) node[left]{$y$};%
\foreach \position in {(2,7),(4,3),(7,5),(6,4)}%
\fill \position circle (0.1cm)node[below,black]{\small$\position$};
\draw [dashed,blue](9,1.5)--(9,0) (8.5,2)--(0,2);
\draw [dashed,blue](8,5.5)--(8,0) (7.5,6)--(0,6);
\draw [dashed,blue](5,7.5)--(5,0) (4.5,8)--(0,8);
\draw [dashed,blue](3,8.5)--(3,0) (2.5,9)--(0,9);
\foreach \position in {(9,2),(8,6),(5,8),(3,9)}
\fill \position circle (0.1cm) node[above]{\small$\position$};
\foreach \position in {(9,2),(8,6),(5,8),(3,9)}
\draw \position circle (0.25cm);
\end{tikzpicture}
\end{minipage}
\begin{minipage}{0.5\textwidth}
\centering
\begin{tikzpicture} [line width=1pt,scale=0.5]%
\draw (-0.5,8) node[right]{Input};%
\draw (0.8,7) node[minimum width=1cm,draw]{$2,7$}
      (2.8,7) node[minimum width=1cm,draw]{$3,9$}
      (4.8,7) node[minimum width=1cm,draw]{$4,3$}
      (6.8,7) node[minimum width=1cm,draw]{$5,8$}
      (8.8,7) node[minimum width=1cm,draw]{$7,5$}
      (10.8,7) node[minimum width=1cm,draw]{$6,4$}
      (12.8,7) node[minimum width=1cm,draw]{$8,6$}
      (14.8,7) node[minimum width=1cm,draw]{$9,2$};
\draw (0.8,6) node[minimum width=1cm]{\scriptsize$1$}
      (2.8,6) node[minimum width=1cm]{\scriptsize$2$}
      (4.8,6) node[minimum width=1cm]{\scriptsize$3$}
      (6.8,6) node[minimum width=1cm]{\scriptsize$4$}
      (8.8,6) node[minimum width=1cm]{\scriptsize$5$}
      (10.8,6) node[minimum width=1cm]{\scriptsize$6$}
      (12.8,6) node[minimum width=1cm]{\scriptsize$7$}
      (14.8,6) node[minimum width=1cm]{\scriptsize$8$};
\draw (-0.5,5) node[right]{Phase 1: processing input data from left
to right};%
\draw (0.8,4) node[minimum width=1cm,draw,fill=yellow]{$2,7$}
      (2.8,4) node[minimum width=1cm,draw]{$3,9$}
      (4.8,4) node[minimum width=1cm,draw,fill=yellow]{$4,3$}
      (6.8,4) node[minimum width=1cm,draw]{$5,8$}
      (8.8,4) node[minimum width=1cm,draw,fill=yellow]{$7,5$}
      (10.8,4) node[minimum width=1cm,draw]{$8,6$}
      (12.8,4) node[minimum width=1cm,draw]{$9,2$};
\draw (0.8,3) node[minimum width=1cm]{\scriptsize$1$}
      (2.8,3) node[minimum width=1cm]{\scriptsize$2$}
      (4.8,3) node[minimum width=1cm]{\scriptsize$3$}
      (6.8,3) node[minimum width=1cm]{\scriptsize$4$}
      (8.8,3) node[minimum width=1cm]{\scriptsize$5$}
      (10.8,3) node[minimum width=1cm]{\scriptsize$6$}
      (12.8,3) node[minimum width=1cm]{\scriptsize$7$};
\draw (-0.5,2) node[right]{Phase 2: processing the list in Phase 1
from right to left}; %
\draw (0.8,1) node[minimum width=1cm,draw]{$9,2$}
      (2.8,1) node[minimum width=1cm,draw]{$8,6$}
      (4.8,1) node[minimum width=1cm,draw]{$5,8$}
      (6.8,1) node[minimum width=1cm,draw]{$3,9$};
\draw (0.8,0) node[minimum width=1cm]{\scriptsize$1$}
      (2.8,0) node[minimum width=1cm]{\scriptsize$2$}
      (4.8,0) node[minimum width=1cm]{\scriptsize$3$}
      (6.8,0) node[minimum width=1cm]{\scriptsize$4$};
\end{tikzpicture}
\end{minipage}
\caption{The maxima of the point sample
$\{(2,7),(3,9),(4,3),(5,8),(7,5),(6,4),(8,6),(9,2)\}$ are marked by
circles. After Phase 1, $(2,7)$, $(4,3)$ and $(7,5)$ are still left
in the list though they are not maximal points. But after Phase 2,
the resulting list contains all maximal points.} \label{ex1}
\end{figure}

The precise description of the algorithm is given as follows. Note
that in the algorithm a list $\v{R}$ is used to store the records
and has to preserve their relative orders.

\medskip

\textbf{Algorithm} \textsf{Two-Phase}\\
\hspace*{1cm}//\textbf{Input}: A sequence of points
$\v{p}=\{\v{p}_1,\dots, \v{p}_{n}\}$\\
\hspace*{1cm}//\textbf{Output}: $\v{Max}(\v{p})$\\ %
\hspace*{1cm}\textbf{begin}\\%
\hspace*{1.2cm}// Phase 1\\%
\hspace*{2cm} $\v{R}:=\{\v{p}_1\}$ \qquad //
$\v{R}$ stores the non-dominated records \\%
\hspace*{2cm} $k:=1$ \qquad // $k$ counts the number of
records\\%
\hspace*{2cm} \textbf{for} $i:=2$ \textbf{to} $n$ \textbf{do}
\hspace*{2cm}\\%
\hspace*{3cm} \textbf{if} $\v{p}_i$ is not dominated by any
point in $\v{R}$ \textbf{then} \\%
\hspace*{4cm} $k:=k+1$\\%
\hspace*{4cm} insert $\v{p}_i$ at the end of $\v{R}$
\hspace*{1.2cm}// so as to retain the input order \\%
\hspace*{1.2cm}// After the for-loop, $\v{R} =\{\v{p}_{j_1},\dots,
\v{p}_{j_k}\}$,
where $j_1<j_2<\dots<j_k$.\\%
\hspace*{1.2cm}// Phase 2 \\%
\hspace*{2cm} $\v{M}:=\{\v{p}_{j_k}\}$\qquad //
$\v{M}$ stores the maxima \\%
\hspace*{2cm} \textbf{for} $i:=k-1$ \textbf{downto} $1$ \textbf{do}
\\ %
\hspace*{3cm} \textbf{if} $\v{p}_{j_i}$ is not dominated by any
point in $\v{M}$ \textbf{then}
insert $\v{p}_{j_i}$ in $\v{M}$ \\
\hspace*{1cm}\textbf{end}

The correctness of Algorithm \textsf{Two-Phase} is guaranteed by
Lemma~\ref{lmm}.

While the two-phase procedure may increase the total number of
comparisons made, the real scalar comparisons made can actually be
simplified since we need only to detect if the incoming element is
dominated by some element in the list $\v{R}$, and there is no need
to check the reverse direction that the incoming element dominates
some element in $\v{R}$. Thus the code for the detection of
dominance or non-dominance is simpler than that of algorithms
performing deletions. Furthermore, for each vector comparison, it is
not necessary to check all coordinates unless one element is
dominated by the other. Briefly, the two-phase algorithm splits the
comparisons made for checking dominance between elements in two
directions.

\subsubsection{The $k$-d trees}

The data structure $k$-d tree (or multidimensional binary search
tree) is a natural extension of binary search tree for
multidimensional data, where $k$ denotes the dimensionality. For
more notational convenience and consistency, we also write,
throughout this paper, $d$ as the dimensionality (but still use
$k$-d tree instead of $d$-$d$ tree). It was first invented by
Bentley \cite{Bentley75}. The idea is to use each of the $d$
coordinates cyclically at successive levels of the tree as the
\emph{discriminator} and direct points falling in the subtrees. If a
node holding the point $\v{r} =(r_1,\dots ,r_d)$ in a $k$-d tree has
the $\ell$-th coordinate as the discriminator, then, for any node
holding the point $\v{w}=(w_1,\dots ,w_d)$ in the subtrees of
$\v{r}$, we have the relation $w_\ell<r_\ell$ if $\v{w}$ lies in the
left-subtree of $\v{r}$, $w_\ell\ge r_\ell$ if $\v{w}$ lies in the
right-subtree of $\v{r}$. The children of $\v{r}$ then move on to
the $(\ell\ \text{mod}\ d)+1$-st coordinate as the discriminator. A
two-dimensional example is given in Figure~\ref{kd}.

\begin{figure}[!htbp]
\begin{center}
\begin{tabular}{|l|l|}
\hline
\begin{tikzpicture}[scale=.8]
\node (a) at (0.5,3.5) {\ding{172}}; \draw (0,0) rectangle (3, 3);
\draw (1,0) -- (1, 3); \fill [black] (1,1.5) circle (1.5pt)
node[above right] {{\tiny $\v{p}_1$}};
\end{tikzpicture}
\begin{tikzpicture}[scale=.8,level distance=8mm,
every node/.style={draw=black,rectangle,inner sep=2pt}]
\draw[draw=none] (0,0) rectangle (4, 4); \node[draw=black,
style=circle] (a) at (2,3) {\tiny{$\v{p}_1$}}
    child[sibling distance=20mm]{node{}}
    child[sibling distance=20mm]{node{}}
;
\end{tikzpicture}
&
\begin{tikzpicture}[scale=.8]
\node (a) at (0.5,3.5) {\ding{173}}; \draw (0,0) rectangle (3, 3);
\draw (1,0) -- (1, 3); \draw (1,2.2) -- (3, 2.2); \fill [black]
(1,1.5) circle (1.5pt) node[above right] {{\tiny $\v{p}_1$}}; \fill
[black] (2.5,2.2) circle (1.5pt) node[above right] {{\tiny
$\v{p}_2$}};
\end{tikzpicture}
\begin{tikzpicture}[scale=.8,level distance=8mm,
every node/.style={draw=black,rectangle,inner sep=2pt}]
\draw[draw=none] (0,0) rectangle (4, 4); \node[draw=black,
style=circle] (a) at (2,3) {\tiny{$\v{p}_1$}}
    child[sibling distance=20mm]{node{}}
    child[sibling distance=20mm]{node[draw=black,style=circle]
    {{\tiny $\v{p}_2$}}
        child[sibling distance=10mm]{node{}}
        child[sibling distance=10mm]{node{}}
    }
;
\end{tikzpicture}
\\ \hline
\begin{tikzpicture}[scale=.8]
\node (a) at (0.5,3.5) {\ding{174}}; \draw (0,0) rectangle (3, 3);
\draw (1,0) -- (1, 3); \draw (1,2.2) -- (3, 2.2); \draw (1.7,0) --
(1.7, 2.2); \fill [black] (1,1.5) circle (1.5pt) node[above right]
{{\tiny $\v{p}_1$}}; \fill [black] (2.5,2.2) circle (1.5pt)
node[above right] {{\tiny $\v{p}_2$}}; \fill [black] (1.7,1) circle
(1.5pt) node[above right] {{\tiny $\v{p}_3$}};
\end{tikzpicture}
\begin{tikzpicture}[scale=.8,level distance=8mm,
every node/.style={draw=black,rectangle,inner sep=2pt}]
\draw[draw=none] (0,0) rectangle (4, 4); \node[draw=black,
style=circle] (a) at (2,3) {\tiny{$\v{p}_1$}}
    child[sibling distance=20mm]{node{}}
    child[sibling distance=20mm]{node[draw=black,style=circle]
    {{\tiny $\v{p}_2$}}
        child[sibling distance=10mm]{node[draw=black,style=circle]
        {{\tiny $\v{p}_3$}}
            child[sibling distance=10mm]{node{}}
            child[sibling distance=10mm]{node{}}
        }
        child[sibling distance=10mm]{node{}}
    }
;
\end{tikzpicture}
&
\begin{tikzpicture}[scale=.8]
\node (a) at (0.5,3.5) {\ding{175}}; \draw (0,0) rectangle (3, 3);
\draw (1,0) -- (1, 3); \draw (1,2.2) -- (3, 2.2); \draw (1.7,0) --
(1.7, 2.2); \draw (0,2.0) -- (1, 2.0); \fill [black] (1,1.5) circle
(1.5pt) node[above right] {{\tiny $\v{p}_1$}}; \fill [black]
(2.5,2.2) circle (1.5pt) node[above right] {{\tiny $\v{p}_2$}};
\fill [black] (1.7,1) circle (1.5pt) node[above right] {{\tiny
$\v{p}_3$}}; \fill [black] (.5,2) circle (1.5pt) node[above right]
{{\tiny $\v{p}_4$}};
\end{tikzpicture}
\begin{tikzpicture}[scale=.8,level distance=8mm,
every node/.style={draw=black,rectangle,inner sep=2pt}]
\draw[draw=none] (0,0) rectangle (4, 4); \node[draw=black,
style=circle] (a) at (2,3) {\tiny{$\v{p}_1$}}
    child[sibling distance=20mm]{node[draw=black,style=circle]
    {{\tiny $\v{p}_4$}}
        child[sibling distance=10mm]{node{}}
        child[sibling distance=10mm]{node{}}
    }
    child[sibling distance=20mm]{node[draw=black,style=circle]
    {{\tiny $\v{p}_2$}}
        child[sibling distance=10mm]{node[draw=black,style=circle]
        {{\tiny $\v{p}_3$}}
            child[sibling distance=10mm]{node{}}
            child[sibling distance=10mm]{node{}}
        }
        child[sibling distance=10mm]{node{}}
    }
;
\end{tikzpicture}
\\\hline
\end{tabular}
\end{center}
\caption{The stepwise construction of a $2$-$d$ tree of four
points.} \label{kd}
\end{figure}
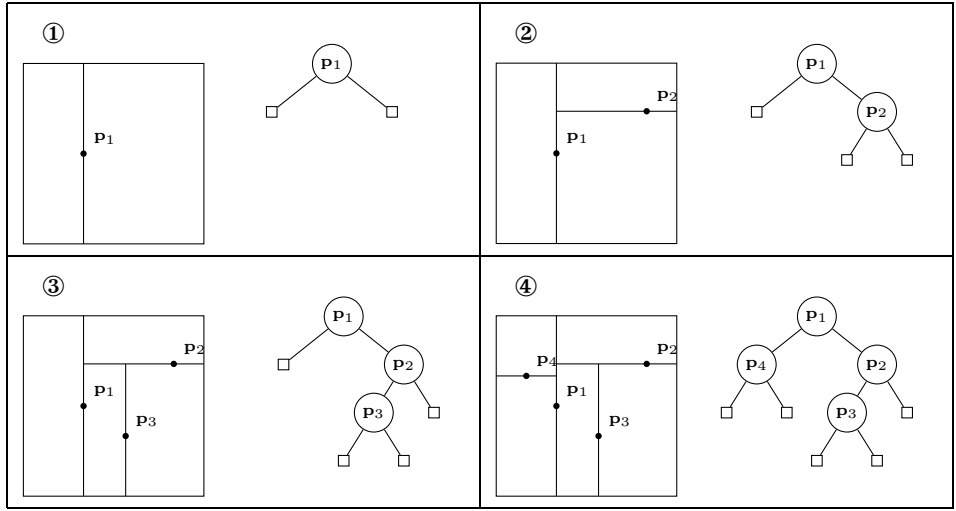

\subsubsection{Bounding-boxes}

Bounding boxes are simple techniques in improving the performance of
many algorithms, especially those dealing with intersecting
geometric objects, and have been widely used in many theoretical and
practical situations.

The application of bounding boxes is straightforward. Let
$\v{u}_{\v{r}}=(u_1,\dots,u_d)$, where $u_i$ is the maximum among
all the $i$-th coordinates of points in the subtree rooted at
$\v{r}$. Then $\v{u}_{\v{r}}$ is defined to be the \emph{upper
bound} of the subtree rooted at $\v{r}$ or simply the upper bound of
the node $\v{r}$. Similarly, define $\v{v}_{\v{r}}=(v_1,\dots,v_d)$
to be the \emph{lower bound} of the subtree rooted at $\v{r}$, where
$v_i$ is the minimum among all the $i$-th coordinates of points in
the subtree rooted at $\v{r}$. A simple example of three-dimensional
points is given in Figure~\ref{bound}. For simplicity, we also use
the upper (or lower) bound of a node . The upper and lower bounds of
a node constitute a bounding box for that subtree.

Now if a point $ \v{p}$ is not dominated by $\v{u}_{\v{r}}$, then
obviously $\v{p}$ is not dominated by any point in the subtree
rooted at $\v{r}$. This means that all comparisons between $\v{p}$
and all points in the subtree rooted at $\v{r}$ can be avoided.
Similarly, when searching for points in the subtree rooted at
$\v{r}$ that are dominated by $\v{p}$, we can first compare it with
$\v{v}_{\v{r}}$, and all comparisons between $\v{p}$ with each node
of that subtree can be saved if $\v{v}_{\v{r}}$ is not dominated by
$\v{p}$.

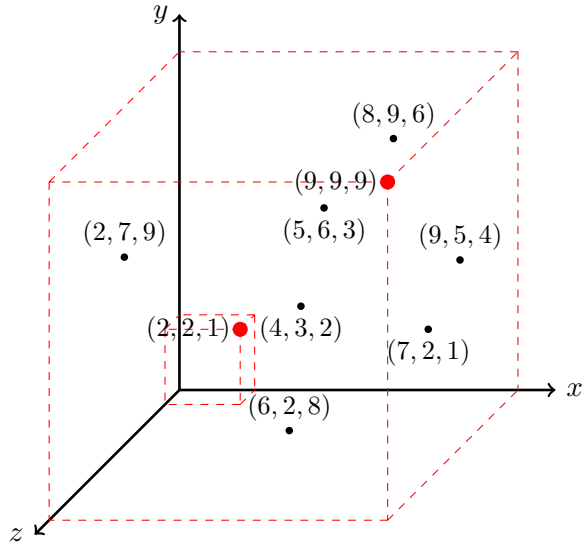
\begin{figure}[tbp]
\centering
\begin{tikzpicture} [scale=0.5]
\draw [->,line width=1pt] (0,0,0)--(10,0,0) node[right]{$x$};
\draw [->,line width=1pt] (0,0,0)--(0,10,0) node[left]{$y$};
\draw [->,line width=1pt] (0,0,0)--(0,0,10)node[left]{$z$};
\draw [dashed,red]
    (9,9,9)--(0,9,9) (0,9,9)--(0,0,9) (0,9,9)--(0,9,0)
    (9,9,9)--(9,0,9) (9,0,9)--(9,0,0) (9,0,9)--(0,0,9)
    (9,9,9)--(9,9,0) (9,9,0)--(9,0,0) (9,9,0)--(0,9,0);
\draw [dashed,red]
    (2,2,1)--(0,2,1) (0,2,1)--(0,0,1) (0,2,1)--(0,2,0)
    (2,2,1)--(2,0,1) (2,0,1)--(2,0,0) (2,0,1)--(0,0,1)
    (2,2,1)--(2,2,0) (2,2,0)--(2,0,0) (2,2,0)--(0,2,0);
\foreach \position in {(4,3,2),(7,2,1),(5,6,3)}%
\fill \position circle (0.1cm)node[below,black]{\small$\position$};
\foreach \position in {(9,5,4),(8,9,6),(2,7,9),(6,2,8)}%
\fill \position circle (0.1cm) node[above]{\small$\position$};
\fill[red] (9,9,9) circle (0.2cm) node[left,black]{\small$(9,9,9)$};
\fill[red] (2,2,1) circle (0.2cm) node[left,black]{\small$(2,2,1)$};
\end{tikzpicture}
\caption{Consider the subtree containing the points $
\{(4,3,2),(9,5,4),(7,2,1),(5,6,3),(8,9,6),(2,7,9),(6,2,8)\}$. Then
$(9,9,9)$ and $(2,2,1)$ are the upper bound and the lower bound of
the subtree, respectively.} \label{bound}
\end{figure}

Note that although additional comparisons and spaces are needed for
implementing the bounding boxes in maxima-finding algorithms, the
overall performance is generally improved, especially, when dealing
with samples with a large number of maxima.

\subsection{The proposed algorithm}

We give in this subsection our two-phase maxima-finding algorithm
using $k$-d trees and bounding boxes. In this algorithm, we need
only the upper bounds of the bounding boxes since in each phase we
only detect if the new-coming element is dominated by existing
records. An illustrative example is given in Figure~\ref{kdbound}.

For implementation details, the records are stored, during the first
phase, not only in a $k$-d tree but also in a list to preserve the
order of the records.

\begin{figure}[tbp]
\begin{minipage}{0.5\textwidth}
\centering
\begin{tikzpicture} [line width=1pt,level distance=16mm]
\tikzstyle{level 1}=[sibling distance=36mm]%
\tikzstyle{level 2}=[sibling distance=24mm] \node{ $\v{p}_1$}
  child {
    node{ $\v{p}_3$ }
    child {node{ $\v{p}_6$ }}
    child {node{ $\v{p}_4$ }}
  }
  child {
    node{ $\v{p}_2$}
    child {node{ $\v{p}_5$}}
    child [fill=none] {edge from parent[draw=none]}
    };
\end{tikzpicture}
\end{minipage}
\begin{minipage}{0.5\textwidth}
\centering
\begin{tikzpicture} [scale=0.5]
\fill (10,6) circle (0.1cm) node[above right,black]
{$\v{u}_1$};%
\fill (0,6) circle (0.1cm) node[left ,black] {$\v{p}_4$};%
\fill (1,5) circle (0.1cm) node[below,black] {$\v{p}_3$};%
\fill (3,6) circle (0.1cm) node[above right,black] {$\v{u}_3$};
\fill (3,4.5) circle (0.1cm) node[below,black] {$\v{p}_6$};%
\fill (4,4) circle (0.1cm) node[below,black] {$\v{q}$};%
\fill (6,3) circle (0.1cm) node[below,black] {$\v{p}_1$};%
\fill (8,2) circle (0.1cm) node[below,black] {$\v{p}_2$};%
\fill (10,2) circle (0.1cm) node[right,black] {$\v{u}_2$};%
\fill (10,0.5) circle (0.1cm) node[below,black] {$\v{p}_5$};%
\draw [dashed,line width=0.2pt] (10,6)--(0,6);%
\draw [dashed,line width=0.2pt] (3,6)--(3,4.5);%
\draw [dashed,line width=0.2pt] (8,2)--(10,2);%
\draw [dashed,line width=0.2pt] (10,0.5)--(10,6) ;
\end{tikzpicture}
\end{minipage}
\caption{Consider the $k$-d tree with six points
$\v{p}_1$,$\v{p}_2$,\dots, $\v{p}_6$ and a new point $\v{q}$. The
upper bounds of the trees rooted at $\v{p}_1$, $\v{p}_2$ and
$\v{p}_3 $ are $\v{u}_1$, $\v{u}_2$ and $\v{u}_3$, respectively. To
check if $\v{q}$ is dominated by some point in the tree, the
comparisons between $\v{q}$ and subtrees rooted at $\v{p}_2$ and
$\v{p}_3$ can all be skipped since $\v{q}$ is not dominated by
$\v{u}_2$ and $\v{u}_3$. } \label{kdbound}
\end{figure}
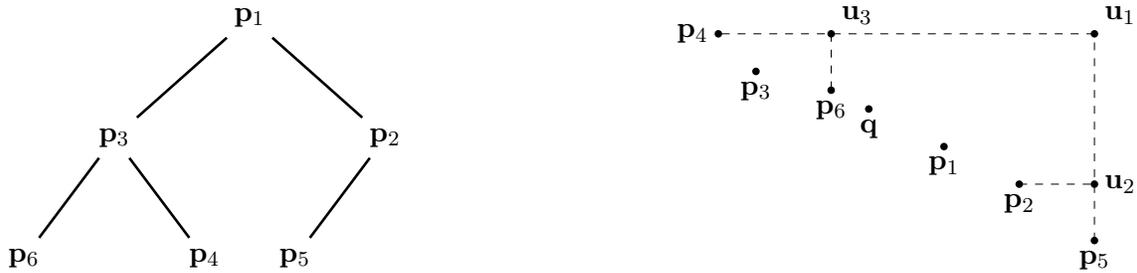

\medskip

\noindent \hspace*{1cm}\textbf{Algorithm} \textsf{Maxima}\\
\hspace*{1cm}//\textbf{Input}: A sequence of points
$\v{p}=\{\v{p}_1,\dots , \v{p}_{n}\}$\\
\hspace*{1cm}//\textbf{Output}: a $k$-d tree rooted at
$\v{r}$ consisting of $ \v{Max}(\v{p})$\\%
\hspace*{1cm}\textbf{begin}\\%
\hspace*{2cm} $\v{r}:=\v{p}_1;
\v{u}_{\v{r}}:=\v{p}_1$\\%
\hspace*{2cm} $\v{q}_1:=\v{p}_1$ \qquad
// $\v{R}:=\{ \v{q}_1\},$ the sequence of the records.\\
\hspace*{2cm} $k:=1$ \qquad // $k$ counts the number of records \\
\hspace*{2cm} \textbf{for} $i:=2$ \textbf{to} $n$ \textbf{do} \\%
\hspace*{3cm} \textbf{if} (\textsf{Dominated}$(\v{r},
\v{p} _i)=0 $) \textbf{then} \\%
\hspace*{4cm}\textsf{Insert}$(\v{r},1,\v{p}_i)$;\\%
\hspace*{4cm} $k:=k+1$; $\v{q}_k:=\v{p}_i$\qquad
\\
\hspace*{2cm}// $\v{R}=\{\v{q}_1,\dots ,\v{q}_k\}$
when $ i=n$\\%
\hspace*{2cm} release the tree rooted at $\v{r}$\\%
\hspace*{2cm} $\v{r}:=\v{q}_k;
\v{u}_{\v{r}}:=\v{q}_k$;\\%
\hspace*{2cm} \textbf{for} $i:=k-1$ \textbf{downto} $1$
\textbf{do}\\ \hspace*{3cm} \textbf{if} (\textsf{Dominated}$(\v{r},
\v{q} _i)=0$) \textbf{then}
\textsf{Insert}$(\v{r},1,\v{q}_i)$ \\
\hspace*{1cm}\textbf{end}

\quad

\noindent
\hspace*{1cm}\textsf{Dominated}$(\v{r},\v{p})$\\
\hspace*{1cm}//\textbf{Input}: A node $\v{r}$ in a $k$-d tree
and a point $\v{p}$\\%
\hspace*{1cm}//\textbf{Output}: $\left\{\begin{array}{ll}
    0, & \text{if $\v{p}$ is not dominated
    by any point in the subtree rooted at $\v{r}$}\\
    1, & \text{otherwise}
    \end{array}\right. $\\ %
\hspace*{1cm}\textbf{begin}\\ \hspace*{2cm} \textbf{if}
($\v{p}\prec \v{r}$) \textbf{then return} 1\\%
\hspace*{2cm} \textbf{if} ($\v{r}.\text{left}\ne \emptyset $ and $
\v{p}\prec \v{u}_{\v{r}.\text{left}}$)
\textbf{then} \\
\hspace*{3cm} \textbf{if} (\textsf{Dominated}($\v{r}.\text{left},
\v{p})=1$)
\textbf{then return} 1\\
\hspace*{2cm} \textbf{if} ($\v{r}.\text{right}\ne \emptyset $ and $
\v{p}\prec
\v{u}_{\v{r}.\text{right}}$) \textbf{then} \\
\hspace*{3cm} \textbf{if} (\textsf{Dominated}($\v{r}.\text{right},
\v{p})=1$)
\textbf{then return} 1 \\
\hspace*{2cm} \textbf{return} 0\\
\hspace*{1cm} \textbf{end}

\quad

\noindent \hspace*{1cm}\textsf{Insert}$(\v{r},\ell ,\v{p})$\\
\hspace*{1cm}\textbf{begin} \\ \hspace*{2cm}
$\v{u}_{\v{r}}:=\text{max} \{\v{u}_{\v{r}}, \v{p}\}$ \quad
// update the upper bound\\ \hspace*{2cm} compare the
$\ell$-th component of $\v{p}$ and that of $ \v{r}$\\
\hspace*{2cm} \textbf{Case 1:} $\v{p}_{\ell }\ge r_{\ell
}$ and $\v{r}.{ \text{right}}\ne \emptyset $\\
\hspace*{3cm} \textsf{Insert}$(\v{r}.{\text{right}},1+\ell\
\text{mod }\ d,\v{p})$\\ \hspace*{2cm} \textbf{Case 2:} $\v{p}_{\ell
}\ge r_{\ell }$ and $\v{r}.{ \text{right}}=\emptyset $\\
\hspace*{3cm} $\v{r}.{\text{right}}:=\v{p}$;
$\v{u}_{\v{r}.{\text{right}}}:=\v{p}$\\
\hspace*{2cm} \textbf{Case 3:} $\v{p}_{\ell }<r_{\ell }$ and
$\v{r}.{\text{ left}}\ne \emptyset $\\ \hspace*{3cm}
\textsf{Insert}$(\v{r}.{\text{left}},1+\ell\ \text{mod} \
d,\v{p})$\\ \hspace*{2cm} \textbf{Case 4:} $\v{p}_{\ell
}<r_{\ell }$ and $\v{r}.{\text{ left}}=\emptyset $\\
\hspace*{3cm} $\v{r}.{\text{left}}:=\v{p}$;
$\v{u}_{\v{r} .{\text{left}}}:=\v{p}$\\
\hspace*{1cm}\textbf{end}

\quad

Note that the upper bound of a subtree is updated after a new point
is inserted. In the procedure \textsf{Dominated}, the ``filtering
role" played by the upper bounds may quickly reduce many
comparisons. In practice, if a point $\v{p}$ is not dominated by
$\v{u}_{\v{r}.{\text{left}}}$ (or $\v{u}_{ \v{r}.{\text{right}}}$),
then $\v{p}$ is not dominated by any point in the subtree and the
comparisons between $\v{p}$ and the points of the subtree are all
skipped.

\subsection{Further improvements: sieving and pruning}
\label{sec-fi}

The algorithm \textsf{Maxima} is not on-line in nature since it
requires two passes through the input. In this section, we discuss
sieving and periodic pruning techniques, and present
an on-line algorithm.

\paragraph{Sieving} The idea is to select an element (or several
elements) as a good sieve (or ``keeper"), so as to dominate as many
as possible in-coming points, thus reducing the total number of
comparisons made. This was first introduced in \cite{BCL93}.

For our algorithm \textsf{Maxima}, many of the points inserted
into the $k$-d tree may have limited power of dominating in-coming
points. We can improve further Algorithm \textsf{Maxima} by choosing
the input point with the largest $L^1$-norm (which is the sum of the
absolute values of all coordinates) to be the sieve and incorporate
such a procedure as part of algorithm \textsf{Maxima}. The resulting
implementation is very efficient, notably for samples with only a
small number of maxima.

A simple way to incorporate the maximum $L^1$-norm point is to
replace the line
\begin{quote}
\textbf{for} $i:=2$ \textbf{to} $n$ \textbf{do}
\end{quote}
in algorithm \textsf{Maxima} by the following
\begin{quote}
\hspace*{2cm} $\v{s}:=\v{p}_1$ \qquad //
$\v{s}=$ sieve \\%
\hspace*{2cm} \textbf{for} $i:=2$ \textbf{to} $n$ \textbf{do}\\%
\hspace*{3cm} \textbf{if} ($\v{p}_i\nprec \v{s}$)
\textbf{then} \\%
\hspace*{4cm} $\v{s} :=\left\{\begin{array}{ll}
    \v{s},&\textbf{if }
    \norm{\v{s}} \ge \norm{\v{p}_i};\\
    \v{p}_i,& \textbf{if }
    \norm{\v{s}}<\norm{\v{p}_i},
\end{array}\right.$
\end{quote}
where $\norm{\cdot}$ denotes the $L^1$-norm. Thus the sieving
process is carried out only during the first phase. Other sieves can
be considered similarly.

\paragraph{Pruning} In the first phase of Algorithm
\textsf{Maxima}, the $k$-d tree may contain some nodes that are
dominated by other nodes in the tree, and will only be removed in
the second phase of the algorithm. In particular, if the dominated
nodes are close to the root, then more comparisons may be made. It
is thus more efficient to carry out an initial pruning of the $k$-d
tree by removing dominated points in the tree after a sufficiently
large number of records have been inserted (and still small compared
with the total sample size). Such an early pruning idea can be
implemented by running the following procedure.

\medskip

\noindent \hspace*{1cm}\textbf{Algorithm} \textsf{Prune}\\ %
\hspace*{1cm}// only called once in the first
\textbf{for}-loop of Algorithm
\textsf{ \textsf{Maxima}}\\ %
\hspace*{1cm}// Assume $\v{R}=\{\v{q}_{{1}},\ldots
,\v{q}_{{K} }\}$\\%
\hspace*{1cm}\textbf{begin}\\%
\hspace*{2cm} release the $k$-d tree\\%
\hspace*{2cm} $\v{r} :=\v{q}_{{K}};
\v{u}_{\v{r}}:=\v{ q}_{{K}}$\\%
\hspace*{2cm} \textbf{for} $j:=K-1$ \textbf{downto} $1$ \\
\hspace*{3cm} \textbf{if} (\textsf{Dominated}$(\v{r},\v{q} _j)=0 $)
\textbf{then}
\textsf{Insert}$(\v{r},1,\v{q}_j)$\\
\hspace*{1cm}\textbf{end}\\

We can call \textsf{Prune} when, say $i=\lfloor n/\lambda
\rfloor $ or $i=\lfloor n^{\delta }\rfloor $,
where $i$ is the index in the first \textbf{for}-loop of algorithm
\textsf{Maxima}.
For example, we can take $\lambda =10$ and $\delta
=2/3$. Which choice is optimal is an interesting issue but depends
on the practical implementations. Also one may consider the use
of periodic pruning, but since pruning is a costly operation, we
chose to apply it only once in our simulations.

\paragraph{An on-line algorithm} On-line maxima-finding algorithms
always retain the maxima of the all input points read so far and are
often needed in many practical situations. A simple means to convert
our algorithm \textsf{\textsf{Maxima}} into an on-line one is to add
a procedure to delete the dominated elements in the $k$-d tree. The
deletions can be made immediately after comparison with each
in-coming element, which results in restructuring the whole $k$-d
tree and may be very costly if the elements deleted are not near the
bottom of a large tree. A simple way to perform the deletion of a
node is to re-insert all its descendant nodes one by one, in the
order inherited from the original input sequence. However, the
procedure can be time-consuming and the resulting tree may be quite
imbalanced.

We introduce an on-line implementation by storing the current maxima
in an extra list. In each iteration, we look for all points in the
$k$-d tree that are dominated by the in-coming point $\v{p}$, mark
them, and delete the corresponding elements from the extra list. The
lower bounds of the bounding boxes are useful here. Recall
${\v{v}}_{\v{r}}=(v_1,\dots ,v_d)$, where $v_i$ is the minimum among
all the $i$-th coordinates of points in the subtree rooted at
$\v{r}$. When searching for those points in $\v{M}$ that are
dominated by $\v{p}$, we can skip checking the subtree of $\v{r}$ if
$\v{v}_{\v{r}}$ is not dominated by $\v{p}$.

The on-line algorithm is given as follows.

\quad

\noindent \hspace*{1cm}\textbf{Algorithm}
\textsf{On-Line-Maxima}\\%
\hspace*{1cm}//\textbf{Input}: A sequence of points
$\v{p}=\{\v{p}_1,\dots , \v{p}_{n}\}$\\%
\hspace*{1cm}//\textbf{Output}: $\v{M}:=$ the list containing
$\v{Max}(\v{p})$\\%
\hspace*{1cm}\textbf{begin}\\%
\hspace*{2cm}$\v{r}:=\v{p}_1; \v{u}_{\v{r}}:=\v{p} _1; \v{v}_{\v{r}}
:=\v{p}_1$ \\%
\hspace*{2cm} $\v{M}:=\{\v{p}_1\}$\\%
\hspace*{2cm} \textbf{for} $i:=2$ \textbf{to} $n$
\textbf{do}\\%
\hspace*{3cm} \textbf{if} (\textsf{Dominated}$(\v{r},\v{p}
_i)=0 $) \textbf{then} \\%
\hspace*{4cm} \textsf{Delete}$(\v{r},\v{p}_i)$\\
\hspace*{4cm} \textsf{Insert}$(\v{r},1,\v{p}_i)$ \\
\hspace*{4cm} $\v{M}:=\v{M}\cup \{\v{p}_i\}$\\
\hspace*{1cm}\textbf{end}

\quad

\noindent \hspace*{1cm}\textsf{Delete}$(\v{r},\v{p})$
\\ \hspace*{1cm}//\textbf{Input}: A node $\v{r}$ of a
$k$-d tree and a point $\v{p}$\\
\hspace*{1cm}//\textbf{Output}: a more compact $\v{M}$ (all
dominated points are removed) \\
\hspace*{1cm}\textbf{begin}\\ \hspace*{2cm} \textbf{if} ($\v{r}\prec
\v{p}$) \textbf{then}
\\
\hspace*{3cm} \textbf{if} ($\v{r}$ is unmarked) \textbf{then\qquad
}// The set of unmarked nodes is exactly $\v{M}$
\\ \hspace*{4cm} delete $\v{r}$ from $\v{M}$
\\ \hspace*{4cm} mark $\v{r}$\\ \hspace*{2cm}
\textbf{if} ($\v{r}.\text{left}\ne \emptyset $ and $
\v{v}_{\v{r}.\text{left}}\prec \v{p}$)
\textbf{then} \textsf{Delete}($\v{r}.\text{left},\v{p}$)\\
\hspace*{2cm} \textbf{if} ($\v{r}.\text{right}\ne \emptyset $ and $
\v{v}_{\v{r}.\text{right}}\prec \v{p}$)
\textbf{then} \textsf{Delete}($\v{r}.\text{right},\v{p}$)\\
\hspace*{1cm}\textbf{end}

\quad

Note that the only difference between the procedure \textsf{Insert}
of algorithm \textsf{On-Line-Maxima} and that of algorithm
\textsf{Maxima} is that we need to update both the upper bounds and
the lower bounds in the procedure \textsf{Insert}$(\v{r}, j,\v{p})$
of algorithm \textsf{On-Line-Maxima}.

\subsection{Comparative discussions}

We ran a few sequential algorithms and tested their performance
under several types of random data, each with $1000$ iterations; the
average values of the results are given in Tables \ref{3cube} and
\ref{3sim}. The points are generated uniformly and independently at
random from a given region $D$, which is either a hypercube or a
simplex.

\begin{itemize}
\item list: a sequential algorithm using a linked list
(see \cite{BCL93});

\item $d$-tree: a sequential algorithm using the $d$-ary
tree proposed in \cite{Schutze03};

\item quadtree: a sequential algorithm using quadtree
(see \cite {Habenicht83,SS96,MTT02});

\item $2$-phase: algorithm \textsf{Maxima};

\item +prune: algorithm \textsf{Maxima} with an early pruning for
$i=n/10$;

\item +sieve: algorithm \textsf{Maxima} with the max-$L^1$-norm
sieve;

\item +prune\&sieve: algorithm \textsf{Maxima} with pruning for
$i=n/10$ and the max-$L^1$-norm sieve.

\end{itemize}

\begin{table}[!htbp]
\caption{The average numbers of scalar comparisons per input point
when $D=[0,1]^d$, where $d\in\{3,4,6,10\}$.} \label{3cube}\centering
{\small \centering {\renewcommand{\arraystretch}{1.2}
\tabcolsep=2pt}}
\par{\small $d=3$ }\par
{\small
\begin{tabular}{c|ccc|cccc}
\toprule $n$ & list & d-tree & quadtree & 2-phase & +prune & +sieve
& +prune\&sieve\\\hline $10^2$ & $11.40$ & $19.38$ & $13.58$ &
$24.72$
& $23.23$ & $19.10$ & $18.82$ \\
$10^3$ & $11.01$ & $15.01$ & $11.38$ & $24.29$
& $20.81$ & $13.23$ & $12.43$ \\
$10^4$ & $8.28$ & $12.02$ & $9.41$ & $23.30$
& $17.70$ & $8.44$ & $7.69$ \\
$10^5$ & $6.36$ & $11.21$ & $8.50$ & $23.31$
& $15.70$ & $5.78$ & $5.30$ \\
$10^6$ & $5.01$ & $11.40$ & $8.07$ & $23.05$
& $13.75$ & $4.40$ & $4.09$ \\
$10^7$ & $4.24$ & $11.51$ & $7.91$ & $23.76$
& $12.50$ & $3.73$ & $3.54$\\
$10^8$ & $3.88$ & $12.02$ & $7.67$ & $24.11$
& $11.39$ & $3.36$ & $3.25$\\
\toprule
\end{tabular}
}
\par
\vspace*{0.3cm}
{\small $d=4$ }
\par
{\small
\begin{tabular}{c|ccc|cccc}
\toprule $n$ & list & d-tree & quadtree & 2-phase & +prune & +sieve
& +prune\&sieve\\ \hline
$10^2$ & $26.96$ & $47.28$ & $30.29$ & $50.22$
& $50.05$ & $44.28$ & $44.78$ \\
$10^3$ & $37.41$ & $49.48$ & $31.53$ & $53.28$
& $51.07$ & $38.43$ & $37.76$ \\
$10^4$ & $32.48$ & $40.62$ & $26.80$ & $48.34$
& $43.94$ & $25.73$ & $24.79$ \\
$10^5$ & $22.36$ & $34.32$ & $22.60$ & $44.30$
& $37.75$ & $16.65$ & $15.78$ \\
$10^6$ & $14.69$ & $32.36$ & $20.66$ & $42.69$
& $33.00$ & $11.32$ & $10.61$ \\
$10^7$ & $10.08$ & $32.46$ & $19.47$ & $42.74$
& $29.87$ & $8.40$ & $7.80$\\
$10^8$ & $8.40$ & $33.04$ & $19.05$ & $52.22$
& $28.88$ & $6.83$ & $6.08$\\
\toprule
\end{tabular}
}
\par
\vspace*{0.3cm}
{\small $d=6$ }
\par
{\small
\begin{tabular}{c|ccc|cccc}
\toprule $n$ & list & d-tree & quadtree & 2-phase & +prune & +sieve
& +prune\&sieve\\ \hline $10^2$ & $75.44$ & $139.19$ & $74.32$
&$129.85$ & $131.41$
& $126.53$ & $128.20$\\
$10^3$ & $228.69$ & $284.69$ & $130.37$ & $193.84$
& $193.27$ & $177.23$ & $177.44$\\
$10^4$ & $384.86$ & $343.69$ & $149.75$ & $194.56$
& $194.05$ & $163.10$ & $163.17$\\
$10^5$ & $404.74$ & $298.21$ & $131.41$ & $162.01$
& $161.27$ & $116.86$ & $117.40$\\
$10^6$ & $310.75$ & $222.30$ & $104.53$ & $133.34$
& $131.66$ & $77.55$ & $78.68$\\
$10^7$ & $190.08$ & $166.02$ & $86.63$ & $118.09$
& $112.34$ & $52.13$ & $52.65$\\
$10^8$ & $100.77$ & $136.69$ & $74.97$ & $109.50$
& $98.93$ & $36.46$ & $36.36$\\
\toprule
\end{tabular}
}
\par
\vspace*{0.3cm}
{\small $d=10$ }
\par
{\small
\begin{tabular}{c|ccc|cccc}
\toprule $n$ & list & d-tree & quadtree & 2-phase & +prune & +sieve
& +prune\&sieve\\ \hline
$10^2$ & $137.56$ &$296.70$&$132.72$& $267.90$
& $270.67$ & $269.49$ & $272.22$\\
$10^3$ & $1048.73$ &$1496.07$&$458.30$& $774.85$
& $777.16$ & $769.01$ & $771.42$\\
$10^4$ & $5392.57$ &$4916.40$&$1190.22$& $1526.83$
& $1528.66$ & $1498.47$ & $1499.93$\\
$10^5$ & $17779.34$ &$11463.01$&$2201.99$& $2126.49$
& $2132.18$ & $2062.42$ & $2067.98$\\
$10^6$ & $38552.96$ &$18775.90$ &---& $2221.26$
& $2234.51$ & $2121.11$ & $2132.94$\\
$10^7$ & $59207.23$ &$20769.36$ &---& $2023.64$
& $1844.68$ & $1931.37$ & $1750.01$\\
$10^8$ & --- &$19226.26$ &---& $1544.68$ & $1387.00$
& $1429.45$ & $1261.90$\\
\toprule
\end{tabular}
}
\end{table}

\begin{table}[!htbp]
\caption{The average numbers of scalar comparisons per input point
when $D$ is the $d$-dimensional simplex, where $d=3,4$ and $6$.}
\label{3sim}\centering {\small \centering
{\renewcommand{\arraystretch}{1.2} \tabcolsep=2pt }}
\par
\vspace*{0.3cm}
{\small $d=3$ }
\par
{\small
\begin{tabular}{c|ccc|cccc}
\toprule
$n$ & list & d-tree & quadtree & 2-phase
& +prune & +sieve & +prune\&sieve\\
\toprule
$10^2$ & $40.96$ & $62.81$ & $30.50$ & $57.68$
& $58.00$ & $57.87$ & $58.26$\\
$10^3$ & $134.05$ & $112.71$ & $43.98$ & $82.03$
& $80.78$ & $81.34$ & $80.24$\\
$10^4$ & $357.25$ & $203.97$ & $55.91$ & $95.20$
& $92.37$ & $93.78$ & $91.23$\\
$10^5$ & $858.65$ & $402.18$ & $76.19$ & $105.64$
& $100.79$ & $104.10$ & $99.59$\\
$10^6$ & $1957.22$ & $835.16$ & $126.45$ & $117.42$
& $107.53$ & $117.11$ & $107.60$\\
$10^7$ & $4334.09$ & $1678.73$ & $161.25$ & $129.18$
& $106.81$ & $130.72$ & $108.50$\\
$10^8$ & $9417.80$ & $3543.73$ & $331.25$ & $142.22$
& $116.74$ & $142.73$ & $116.98$\\
\toprule
\end{tabular}
}
\par
\vspace*{0.3cm}
{\small $d=4$ }
\par
{\small
\begin{tabular}{c|ccc|cccc}
\toprule $n$ & list & d-tree & quadtree & 2-phase & +prune & +sieve
& +prune\&sieve\\ \hline
$10^2$ & $81.74$ & $123.95$ & $57.61$ & $107.18$
& $108.36$ & $108.37$ & $109.55$ \\
$10^3$ & $441.09$ & $368.00$ & $117.09$
& $199.20$ & $199.35$ & $199.70$ & $199.87$ \\
$10^4$ & $1917.26$ & $910.44$ & $208.67$
& $287.21$ & $286.60$ & $287.09$ & $286.49$ \\
$10^5$ & $7316.79$ & $2230.39$ & $356.48$
& $373.86$ & $371.80$ & $373.60$ & $371.60$ \\
$10^6$ & $25786.00$ & $5948.65$ & $614.88$
& $474.28$ & $460.27$ & $474.84$ & $461.06$ \\
$10^7$ & $86609.63$ & $17071.62$ & $1302.10$
& $532.85$ & $487.16$ & $534.66$ & $489.15$ \\
$10^8$ & --- & $53140.49$ & $4696.73$ & $651.13$
& $698.55$ & $646.59$ & $693.70$ \\
\toprule
\end{tabular}
}
\par
\vspace*{0.3cm}
{\small $d=6$ }
\par
{\small
\begin{tabular}{c|ccc|cccc}
\toprule $n$ & list & d-tree & quadtree & 2-phase & +prune & +sieve
& +prune\&sieve\\ \hline
$10^2$ & $126.37$ & $221.77$ & $91.42$ & $175.93$
& $177.77$ & $177.79$ & $179.63$ \\
$10^3$ & $1096.21$ & $1175.40$ & $268.67$ & $467.27$
& $468.87$ & $468.96$ & $470.56$ \\
$10^4$ & $8284.26$ & $5660.90$ & $758.05$ & $993.25$
& $995.64$ & $994.77$ & $997.16$ \\
$10^5$ & $55200.49$ & $24332.05$ & $2178.38$ & $1849.37$
& $1856.49$ & $1850.86$ & $1858.01$ \\
$10^6$ & $331776.01$ & $93275.52$ & $6825.69$ & $3153.92$
& $3125.31$ & $3155.81$ & $3127.10$ \\
$10^7$ & --- & $368306.29$ & $8418.26$ & $5090.63$
& $5029.78$ & $5092.54$ & $5031.71$ \\
$10^8$ & --- & --- & --- & $7996.92$ & $7403.24$
& $7998.93$ & $7405.39$ \\
\toprule
\end{tabular}
}
\end{table}

Table \ref{3cube} shows evidently that our two-phase maxima-finding
algorithms, whether coupling with sieving and pruning techniques,
perform very well under random inputs from the $d$-dimensional
hypercubes. They are efficient and uniformly scalable since the
average number of scalar comparisons each point involved is
gradually rising, in contrast to the fast increase of other
algorithms. Note that, according to a result by Devroye
\cite{Devroye99}, we expect that the average number of scalar
comparisons each point involves tends eventually to $d$ in each
case. This is visible for $d=3$ but less clear for higher values of
$d$, as the convergence rate is very slow. Also the numbers in each
column first increases as $n$ increases and then decreases.

On the other hand, although the asymptotic growth rate of the
expected numbers of maxima $\mu_{n,d}$ in such cases are
approximately $(\log n)^{d-1}/(d-1)!$ for large $n$ and fixed $d$,
the real values of $\mu_{n,d}$ for moderate $d$ soon become large;
for example, when $d=10$
\[
    \{\mu_{10^i,10}\}_{i=2,\dots,8}
    \approx\{94, 765, 4\,947, 25\,113, 103\,300, 357\,604,
    1\,076\,503\}.
\]
These values were computed by the recurrence (see
\cite{BDHT05})
\[
    \mu_{n,d} = \frac1{d-1} \sum_{1\le j<d}
    H_n^{(d-j)}\mu_{n,j}\qquad(d\ge2),
\]
with $\mu_{n,1} = 1$ for $n\ge1$, where the $H_n^{(j)} := \sum_{1\le
i\le n} 1/i^j$ are Harmonic numbers. They can also be estimated
by the asymptotic approximations given in \cite{BDHT05}.

The situation is very similar (see Table~\ref{3sim}) when the random
samples are generated from the $d$-dimensional simplex,
$D=\{\v{x}:x_i\ge 0,\sum_{1\le i\le d} x_i\le 1\}$ for which the
expected numbers of maxima $\nu_{n,d}$ are of order $n^{1-1/d}$
instead of $(\log n)^{d-1}$; see \cite{BDHT05}. In such cases,
$\nu_{n,d}$ grows even faster than $\mu_{n,d}$. For example, when
$d=6$,
\[
    \{\mu_{10^i,6}\}_{i=2,\dots,8}
    \approx\{95, 863, 7\,281, 57\,858, 439\,110, 3\,223\,774,
    23\,121\,832\}.
\]
These values were computed by the exact formula
\[
    \nu_{n,d} = n\sum_{0\le j<d} \binom{d-1}{j}
    (-1)^j\frac{\Gamma(n)\Gamma((j+1)/d)}{\Gamma(n+(j+1)/d)}
    \qquad(d\ge2),
\]
which follows from
\begin{align*}
    \nu_{n,d} &= n\mb{P}(\v{x}_1 \text{ is a maxima})\\
    &= d! n \int_D\left(1-\left(1-
    {\textstyle\sum}_{1\le i\le d} x_i\right)^d\right)^{n-1}
    \text{d} \v{x} \\
    &= dn \int_0^1 \left(1-(1-y)^d\right)^{n-1} y^{d-1} \text{d} y,
\end{align*}
by straightforward calculations, where $\Gamma$ denotes the Gamma
function. For similar details, see \cite{BDHT05}.

Unlike hypercubes where sieving is seen to be very helpful, the gain
of sieving for random samples whose coordinates are roughly
negatively correlated is marginal since there is no ``omnipotently
powerful" point; see \cite{Baryshnikov07,Golin93}.

A feature of the quadtree algorithm is that by its large amount of
branching factors ($2^d-2$), the position of a point in the tree is
quickly identified, often after a few comparisons, and the bounding
boxes are thus not helpful here. We also tested $2$-phase quadtree
and $2$-phase $d$-tree algorithms, the improvement over the original
algorithms is much more significant in $d$-trees than in quadtrees.
In contrast, since $k$-d trees are binary, the use of the bounding
boxes plays a crucial role in accelerating the performance of the
algorithm.

Note that the data collected in these two tables do not reflect
directly the running time of each program. In terms of running
time, our algorithms perform much better than the others.

Simulations also suggested that our on-line algorithm is also
reasonably efficient when compared with other algorithms.

\section{Average-case analysis of algorithm $\textsf{Maxima}$}
\label{sec-analysis}

We derive in this section a few analytic results in connection with
the performance of the algorithms we proposed in this paper. In
general, probabilistic analysis of sequential algorithms for finding
the maxima of random samples is very difficult due to the dynamic
nature of the algorithms; see \cite{Devroye99, Golin94} and the
references therein.

\subsection{How many non-dominated records are there?}

The performance of \textsf{Maxima} depends heavily on the number of
records, which in turn is closely related to the number of maxima.

\begin{thm} Let $R_{n}$ denote the number of non-dominated records
in a sequence $\{\v{p}_1,\dots,\v{p}_n\}$ of independent and
uniformly distributed points from some region $D$ in $\mb{R}^d$. Let
$M_n$ denote the maxima of $\{\v{p}_1,\dots, \v{p}_n\}$. Then
\begin{align} \label{Rn-Mn}
    \mb{E}(R_n)=\sum_{i=1}^n\frac{\mb{E}(M_i)}{i}.
\end{align}
\end{thm}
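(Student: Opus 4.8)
The plan is to exploit the fact that the $n$-th point $\v{p}_n$ is a non-dominated record of $\{\v{p}_1,\dots,\v{p}_n\}$ precisely when it is a maximum of that set (since no later points exist to dominate it), and then to average over which of the $n$ points plays the role of ``the point under consideration.'' Concretely, introduce the indicator $I_{j,n}$ which is $1$ if $\v{p}_j$ is a non-dominated record of $\{\v{p}_1,\dots,\v{p}_n\}$, i.e. $\v{p}_j$ is not dominated by any $\v{p}_i$ with $i<j$. Then $R_n=\sum_{j=1}^n I_{j,n}$, so by linearity $\mb{E}(R_n)=\sum_{j=1}^n \mb{P}(\v{p}_j\text{ is a record})$. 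The event that $\v{p}_j$ is not dominated by any of $\v{p}_1,\dots,\v{p}_{j-1}$ depends only on the first $j$ points, which are i.i.d.\ uniform from $D$; by exchangeability this probability is exactly the probability that a uniformly chosen one among $j$ i.i.d.\ points is not dominated by the other $j-1$, which equals $\mb{E}(M_j)/j$ since $M_j$ counts exactly the points among the first $j$ that are not dominated by another of those $j$ points.

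The one subtlety to handle carefully is the distinction between ``not dominated by an \emph{earlier} point'' (the record condition, which is asymmetric in the index) and ``not dominated by \emph{another} point of the set'' (the maximum condition, which is symmetric). The key observation is that, conditioned on the \emph{unordered} set $\{\v{p}_1,\dots,\v{p}_j\}$, all $j!$ orderings are equally likely, so for a fixed point $\v{q}$ in this set the probability that $\v{q}$ appears before every point that dominates it is $1/(1+D_{\v{q}})$, where $D_{\v{q}}$ is the number of points in the set dominating $\v{q}$; in particular this probability is $1$ exactly when $D_{\v{q}}=0$, i.e.\ when $\v{q}$ is a maximum. Averaging $\mathbf 1[\v{q}\text{ is a record}]$ over the uniform choice of which set-element is labelled $\v{p}_j$ gives $\frac1j\sum_{\v{q}}\frac1{1+D_{\v{q}}}$; one must then check that the expectation of this quantity equals $\frac1j\mb{E}(M_j)$. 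This follows because $\mb{E}\bigl(\sum_{\v{q}}\frac1{1+D_{\v{q}}}\bigr)=\mb{E}(R_j)$ need not be invoked—rather, the cleaner route is the direct one: $\mb{P}(\v{p}_j\text{ is a record})=\mb{P}(\v{p}_j\text{ is not dominated by }\v{p}_1,\dots,\v{p}_{j-1})$, and by relabelling the i.i.d.\ points this is the same as $\mb{P}(\v{p}_1\text{ is not dominated by }\v{p}_2,\dots,\v{p}_j)=\frac1j\sum_{i=1}^j\mb{P}(\v{p}_i\text{ is not dominated by the other }j-1)=\frac1j\mb{E}(M_j)$, the last step again by linearity of expectation applied to $M_j=\sum_{i=1}^j\mathbf 1[\v{p}_i\text{ is a maximum of the }j\text{ points}]$.

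Assembling the pieces, $\mb{E}(R_n)=\sum_{j=1}^n\mb{P}(\v{p}_j\text{ is a record})=\sum_{j=1}^n\frac{\mb{E}(M_j)}{j}$, which is \eqref{Rn-Mn}. The main (and essentially only) obstacle is making the exchangeability/relabelling argument airtight—being precise that ``record among the first $j$'' has a probability depending only on $j$ and not on $n$, and that this probability coincides with $\mb{E}(M_j)/j$; once that symmetry is stated cleanly, the rest is just linearity of expectation and summation. No measure-theoretic difficulty arises because $D$ has a density, so ties among coordinates occur with probability zero and the dominance relation is a.s.\ unambiguous (the ``two identical points'' caveat in the definition is a null event here).
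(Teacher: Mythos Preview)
Your proof is correct and follows the same route as the paper: express $R_n$ as a sum of record-indicators, note that $\v{p}_j$ is a record iff it is a maximum of $\{\v{p}_1,\dots,\v{p}_j\}$, and use exchangeability to obtain $\mb{P}(\v{p}_j\text{ is a record})=\mb{E}(M_j)/j$. The middle-paragraph digression through $1/(1+D_{\v{q}})$ is unnecessary and slightly off (that quantity is the probability that $\v{q}$ is a record under a uniformly random ordering of all $j$ points, not the conditional probability given that $\v{q}$ occupies position $j$, which is simply $\mathbf 1[D_{\v{q}}=0]$), but you rightly abandon it for the direct exchangeability argument, which is exactly what the paper does.
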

\begin{proof}
By assumption,
\[
    \mb{P}\left( \v{p}_n\in \v{Max}
    (\{\v{p}_1,\dots ,\v{p}_n\})\right)
    =\cdots
    =\mb{P}\left( \v{p}_n\in
    \v{Max}(\{\v{p}_1,\dots ,\v{p}_n\})\right).
\]
Thus
\[
    \mb{E}(M_n)
    =\sum_{i=1}^n\mb{P}\left(
    \v{p}_i\in\v{Max}(\{\v{p}_1,
    \dots ,\v{p}_n\})\right)
    =n\mb{P}\left( \v{p}_n\in\v{Max}
    (\{\v{p}_1,\dots ,\v{p}_n\})\right) .
\]
Then we have
\begin{align*}
    \mb{E}(R_n)
    &=\sum_{i=1}^n\mb{E}\v{1}_{
    \left(\v{p}_i\text{ is a record}\right) } \\
    &=\sum_{i=1}^n\mb{P}\left( \v{p}_i\in
    \v{Max}(\{\v{p}_1,\dots ,\v{p}_i\})\right)  \\
    &=\sum_{i=1}^n\frac{\mb{E}(M_i)}{i}.
\end{align*}
\end{proof}

Since $\mb{E}(M_n)$ is usually of order $n^\alpha$ or $(\log
n)^\beta$ for some $\alpha ,\beta \ge 0$ (see
\cite{BDHT05,BHLT01,Devroye93}), if we assume that $\mb{E}(M_n) \sim
c n^\alpha(\log n)^\beta$, where $c, \beta>0$ and $\alpha\in[0,1]$,
then, by (\ref{Rn-Mn}),
\[
    \mb{E} (R_n) \sim
    \left\{\begin{array}{ll}
        \displaystyle \frac{c}{\alpha}n^{\alpha}(\log n)^\beta
        \sim \frac{\mb{E}(M_n)}{\alpha},
        &\text{if } 0<\alpha\le 1; \\
        \displaystyle \frac{c}{\beta +1}(\log n)^{\beta +1}
        \sim \frac{\mb{E}(M_n)}{\beta+1}\log n,
        &\text{if } \alpha=0,
    \end{array}\right.
\]
where $a_n \sim b_n$ means that $a_n/b_n\to 1$ as $n\to\infty$.

In the special case when the region $D$ is the $d$-dimensional
hypercube $[0,1]^d$, then it is also easily seen that the number of
non-dominated records in random samples from $[0,1]^d$ is
identically distributed as the number of maxima in random samples
from $[0,1]^{d+1}$; see \cite{Gnedin07}.

Whichever the case, we always have
\[
    \mb{E}(R_n) \le  \mb{E}(M_n)\sum_{i=1}^n \frac1i
    = O(\mb{E}(M_n) \log n).
\]
This partly explains why our two-phase algorithm does not use much
more comparisons and runs reasonably efficient. Also we see that the
expected additional memory used for the $k$-d tree (and possibly the
array) is at most a $\log n$ factor more than the expected number of
maxima.

\subsection{Expected cost of the sieve algorithm}

Assume that $\v{p}_1,\dots, \v{p}_n$ are sampled independently and
uniformly at random from $[0,1]^d$. Let $\v{s}_n$ be the point with
the maximum $L^1$-norm. Let $\v{1} = (\underbrace{1, \dots,
1}_{d})$.
\begin{lmm} For any $c>0$, \label{lm2}
\[
    \mb{P}\left(\norm{\v{s}_n-\v{1}}
    < (cd!)^{1/d}n^{-1/d}(\log n)^{1/d}\right)
    \ge 1-n^{-c},
\]
for sufficiently large $n$.
\end{lmm}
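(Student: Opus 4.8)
The plan is to rewrite the event in terms of the $L^1$-distance of the sample points to the corner $\v{1}$ of the cube, and then to reduce everything to an exact volume computation. Since each $\v{p}_i=(p_{i,1},\dots,p_{i,d})$ lies in $[0,1]^d$, we have $\norm{\v{p}_i-\v{1}}=\sum_{1\le j\le d}(1-p_{i,j})=d-\norm{\v{p}_i}$; hence the point $\v{s}_n$ of maximum $L^1$-norm is precisely the sample point closest to $\v{1}$ in $L^1$-distance, so that $\norm{\v{s}_n-\v{1}}=\min_{1\le i\le n}\norm{\v{p}_i-\v{1}}$. Writing $t_n:=(cd!)^{1/d}n^{-1/d}(\log n)^{1/d}$, the complement of the event in question is exactly $\{\norm{\v{p}_i-\v{1}}\ge t_n\text{ for all }1\le i\le n\}$.

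Next I would evaluate $\mb{P}(\norm{\v{p}_1-\v{1}}<t)$ for a single uniform point. The change of variables $y_j=1-x_j$ carries $\{\v{x}\in[0,1]^d:\norm{\v{x}-\v{1}}<t\}$ onto $\{\v{y}\in[0,1]^d:y_j\ge0,\ \sum_{1\le j\le d}y_j<t\}$, which for $t\le1$ is the standard $d$-simplex of leg length $t$ and therefore has volume $t^d/d!$. Consequently $\mb{P}(\norm{\v{p}_1-\v{1}}<t)=t^d/d!$ whenever $t\le1$. By the independence of $\v{p}_1,\dots,\v{p}_n$, the probability of the complementary event factorizes, giving
\[
    \mb{P}\bigl(\norm{\v{s}_n-\v{1}}\ge t_n\bigr)
    =\Bigl(1-\tfrac{t_n^d}{d!}\Bigr)^{n}.
\]

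Finally, the exponents in $t_n$ are chosen precisely so that $t_n^d/d!=c(\log n)/n$; substituting this and using the elementary bound $1-x\le e^{-x}$ yields $\bigl(1-c(\log n)/n\bigr)^{n}\le e^{-c\log n}=n^{-c}$, and taking complements gives the asserted inequality. The argument is essentially a one-line volume computation once the reduction is made; the only place where ``$n$ sufficiently large'' is needed is to guarantee $t_n\le1$ (so that the region near the corner is a genuine simplex and is not clipped by the far faces of the cube), which holds because $t_n\to0$. I do not anticipate any real obstacle here; the mild subtlety worth flagging is exactly this restriction $t\le1$ in the simplex-volume formula.
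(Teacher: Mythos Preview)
Your proof is correct and follows essentially the same approach as the paper's own proof: both rewrite the event in terms of $\norm{\v{p}_i}\le d-\varepsilon$, use the simplex volume $\varepsilon^d/d!$ (valid for $\varepsilon\le 1$), apply independence and the bound $1-x\le e^{-x}$, and then substitute $\varepsilon=(cd!)^{1/d}n^{-1/d}(\log n)^{1/d}$. Your presentation is a bit more explicit about the change of variables and the identification of $\v{s}_n$ as the $L^1$-nearest sample point to $\v{1}$, but the argument is the same.
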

\begin{proof}
For $0<\varepsilon<1$
\begin{align*}
    \mb{P}\left( \norm{\v{s}_n-\v{1}}
    <\varepsilon\right)
    &=1-\mb{P}\left( \norm{\v{p}_i}\le
    d-\varepsilon,1\le i\le n\right)  \\
    &=1-\left( 1-\frac{\varepsilon^{d}}{d!}\right)^n \\
    &\ge 1-e^{-\varepsilon^d n/d!}.
\end{align*}
Taking $\varepsilon=(cd!)^{1/d}n^{-1/d}(\log n)^{1/d}$, we see that
the last expression is equal to $1-n^{-c}$. Note that
$\varepsilon<1$ if $n$ is large enough. Indeed, $n/\log n>cd!$
suffices.
\end{proof}

\begin{thm} If the $n$ points $\{\v{p}_1,\dots, \v{p}_n\}$
are sampled independently and uniformly at random from $[0,1]^d$,
then the expected number of scalar comparisons used by our sieve
algorithm satisfies $dn+ O(n^{1-1/d}(\log n)^{d+1/d})$.
\end{thm}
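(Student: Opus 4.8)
The plan is to split the $n$ input points into two groups according to whether or not they are dominated by the eventual max-$L^1$-norm sieve $\v{s}_n$, and to show that the contribution of each group to the total comparison count has the claimed form. First I would set up the event $A$ that the sieve lies within $L^1$-distance $\varepsilon_n := (cd!)^{1/d} n^{-1/d}(\log n)^{1/d}$ of the corner $\v{1}$, which by Lemma~\ref{lm2} occurs with probability at least $1 - n^{-c}$ for a constant $c$ to be chosen later (any $c > 1/d + 1$ will do, so that the exceptional-event cost is negligible). On the event $A^c$ I would bound the total number of scalar comparisons crudely by $O(dn^2)$ (or even just note that it is always at most something polynomial in $n$), so that its expected contribution is $O(n^{-c} \cdot dn^2) = o(n^{1-1/d})$, hence absorbed into the error term.

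The main work is the estimate on the event $A$. On $A$, every input point $\v{p}_i$ that is \emph{not} dominated by $\v{s}_n$ must have $L^1$-distance from $\v{1}$ at most\ldots well, more precisely $\v{p}_i \nprec \v{s}_n$ forces $\v{p}_i$ to lie outside the ``box'' below $\v{s}_n$, and since $\v{s}_n$ is $\varepsilon_n$-close to the corner, such points lie in a thin region near the upper faces of the cube of total volume $O(\varepsilon_n)$ (more carefully, $O(d\,\varepsilon_n)$, a union of $d$ slabs each of volume at most $\varepsilon_n$). Hence the expected number $N$ of such ``surviving'' points is $O(n\varepsilon_n) = O(n^{1-1/d}(\log n)^{1/d})$. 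Each surviving point triggers at most $O(d)$ comparisons against the sieve itself, plus whatever comparisons it makes as it is processed by the $k$-d tree machinery; I would bound the latter by $O(dN)$ comparisons among the survivors (each such point is compared with $O(1)$ tree nodes per coordinate in the worst case, or one can simply bound the work of inserting/querying $N$ points in a $k$-d tree containing $O(N)$ points by $O(dN^2)$ if a cruder bound is acceptable, still $O(n^{2-2/d}(\log n)^{2/d})$, which must be checked against the target error $O(n^{1-1/d}(\log n)^{d+1/d})$ --- so in fact one needs the sharper per-point bound). Every point $\v{p}_i$ that \emph{is} dominated by $\v{s}_n$ contributes exactly: the comparisons made against $\v{s}_n$ before it is discarded. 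Since checking $\v{p}_i \prec \v{s}_n$ stops at the first coordinate where $\v{s}_n$ fails to dominate, and succeeds only after reading all $d$ coordinates, each dominated point costs at most $d$ scalar comparisons against the sieve, giving $\le dn$ in total --- this is the source of the leading term $dn$.

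Assembling the pieces: on $A$, the total is at most $dn$ (dominated points vs.\ sieve) $+\ O(d\,N)$ (survivors vs.\ sieve) $+\ (\text{cost of }k\text{-d tree processing of the }N\text{ survivors})$; taking expectations and using $\mb{E}(N) = O(n^{1-1/d}(\log n)^{1/d})$ together with the bound on the $k$-d tree cost, the whole correction is $O(n^{1-1/d}(\log n)^{d+1/d})$, where the exponent $d+1/d$ on the logarithm comes from multiplying the $(\log n)^{1/d}$ from $\varepsilon_n$ by the $(\log n)^{d-1}$-type factor that governs the expected number of records/maxima among the $N$ survivors (via the Theorem on $\mb{E}(R_n)$ and the known $\mb{E}(M_n) = O((\log n)^{d-1})$ for the hypercube). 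Adding the $o(n^{1-1/d})$ contribution from $A^c$ completes the bound. The hard part will be controlling the $k$-d tree processing cost of the survivors precisely enough to land at $(\log n)^{d+1/d}$ rather than a larger power of $\log n$ or a worse power of $n$; this requires arguing that the survivors, being confined to a volume-$O(\varepsilon_n)$ region, behave essentially like a fresh uniform sample of expected size $\mb{E}(N)$ in that region, so that the number of records among them is $O(\mb{E}(N)\,(\log n)^{d-1})$ in expectation and each insertion/query costs $O(d \cdot (\text{height}))$, with the height bounded by the by-now-standard $O(\log N)$-in-expectation (or absorbed into the logarithmic factors). I would isolate that estimate as the technical core and handle the rest by the routine splitting above.
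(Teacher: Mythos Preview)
Your decomposition has a genuine gap: the sieve used by the algorithm at step $i$ is the \emph{running} maximum-$L^1$-norm point $\v{s}_{i-1}$ among $\v{p}_1,\dots,\v{p}_{i-1}$, not the final sieve $\v{s}_n$. A point $\v{p}_i$ that happens to satisfy $\v{p}_i\prec\v{s}_n$ may very well satisfy $\v{p}_i\nprec\v{s}_{i-1}$ (the early sieves can be far from the corner $\v{1}$), and such a $\v{p}_i$ is \emph{not} discarded after $d$ scalar comparisons: it passes the sieve check and enters the $k$-d tree machinery, costing up to $O(R_{i-1})$ comparisons. So your claim that ``every point dominated by $\v{s}_n$ contributes at most $d$ scalar comparisons'' is false, and your survivor count $N=O(n^{1-1/d}(\log n)^{1/d})$ under-counts the points that actually reach the tree. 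Conditioning on the single global event $A=\{\norm{\v{s}_n-\v{1}}<\varepsilon_n\}$ also destroys the independence you would need between the indicator that $\v{p}_i$ survives and the current tree size $R_{i-1}$.

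The paper's proof avoids both problems by arguing \emph{step by step}. At stage $i+1$ it applies Lemma~\ref{lm2} (with $c=2$) to the first $i$ points, obtaining $\mb{P}(\norm{\v{s}_i-\v{1}}\ge a_i)\le i^{-2}$ with $a_i=(2d!)^{1/d}i^{-1/d}(\log i)^{1/d}$. On the good event for $\v{s}_i$, the point $\v{p}_{i+1}$ fails the sieve only if $\v{p}_{i+1}\notin[0,1-a_i]^d$; crucially, $\v{p}_{i+1}$ is \emph{independent} of $R_i$, so the expected extra cost at step $i+1$ factors as $\mb{P}(\v{p}_{i+1}\notin[0,1-a_i]^d)\cdot O(\mb{E}(R_i))+\mb{P}(\norm{\v{s}_i-\v{1}}\ge a_i)\cdot O(i)=O(i^{-1/d}(\log i)^{d+1/d})+O(i^{-1})$, using $\mb{E}(R_i)=O((\log i)^d)$. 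Summing over $i$ gives the error term. If you want to repair your approach you must work with the running sieve $\v{s}_{i-1}$ and exploit this per-step independence; the single final-sieve conditioning cannot deliver the bound.
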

\begin{proof}
The number of scalar comparisons used for the sieve is at most $dn$.
We claim that the expected number of the extra comparisons is only $
O(n^{1-1/d}(\log n)^{d+1/d})$. Let $a_i=(2d!)^{1/d}i^{-1/d}(\log
i)^{1/d}$. For $i$ large enough
\[
    \mb{P}\left( \norm{\v{s}_i-\v{1}}
    <a_i\right) \ge 1-i^{-2},
\]
by Lemma~\ref{lm2}. If $\v{p}_{i+1}\in [0,1-a_i]^d$ and
$\norm{\v{s}_i-\v{1}}<a_i$ both hold, then $\v{p}_{i+1} \prec
\v{s}_i$, that is, $\v{p}_{i+1}$ is filtered out. Thus, additional
comparisons are required only when either $\v{p}_{i+1}\not\in
[0,1-a_i]^d$ or $\norm{ \v{s}_i-\v{1}}\ge a_i$. If
$\v{p}_{i+1}\not\in [0,1-a_i]^d$, then the additional comparisons
used is bounded above by $O(R_{i})$; if $\norm{\v{s}_i-\v{1}} \ge
a_i$, then the extra comparisons are at most $O(i)$. Note that
$\v{p}_{i+1}$ and $R_{i}$ are independent. Thus, the expected number
of the extra comparisons required in the for-loop of $\v{p}_{i+1}$
is less than
\begin{align*}
    &\mb{P}\left( \v{p}_{i+1}\notin [0,1-a_i]^d\right)
    O(\mb{E}(R_i))+\mb{P}\left(
    \norm{\v{s}_i-\v{1}} \ge a_i\right) O(i) \\
    &=O(i^{-1/d}(\log i)^{d+1/d})+O(i^{-1})
\end{align*}
since $\mb{E}(R_i)=O\left( (\log i)^d\right) $. Summing over all
$i=2,\dots, n$, we obtain the required bound.
\end{proof}

\subsection{Expected performance of \textsf{Maxima} when all points
are maxima}

To further clarify the ``scalability" of \textsf{Maxima}, we
consider in this subsection the expected cost used by
\textsf{Maxima} under the extreme situation when the $d$-dimensional
input points are sampled independently and uniformly from the the
$(d-1)$-dimensional simplex $D=\{\v{x}\,:\,x_i\ge 0, \sum_{1\le i<d}
x_i=1\}$. Note that in the skyline context, an anti-correlated
sample is often discussed, which is the $(d-1)$-dimensional simplex
with a specified error range. In that case, most but not necessarily
all points are maxima. Since no deletion is involved in our
algorithm \textsf{Maxima}, the difference between random samples
from the $(d-1)$-dimensional simplex and the anti-correlated sample
is minor.

When $D$ is the $(d-1)$-dimensional simplex, all points are maxima,
and the time complexity of most algorithms such as the list
algorithm (see \cite{BCL93}) is of order $O(M_n^2) = O(n^2)$. We
show that the expected time complexity of $\textsf{Maxima}$ is
$O(n\log n)$ when $d=2$.

\begin{thm}\label{thm-ec} Assume that the $d$-dimensional points
$\{\v{p}_1, \cdots, \v{p}_n\}$ are independently and uniformly
distributed in the $(d-1)$-dimensional simplex. The expected number
of comparisons needed by algorithm \textsf{Maxima} for random
samples is bounded above by $O(n\log n)$ when $d=2$.
\end{thm}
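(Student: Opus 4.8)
The plan is to exploit the rigid geometry of the $(d-1)$-dimensional simplex when $d=2$: the sample points then all lie on the segment $\{(x,1-x):0\le x\le1\}$, so no point dominates another. Hence every $\v{p}_i$ is a non-dominated record and a maximum, $k=n$ in algorithm \textsf{Maxima}, every call to \textsf{Dominated} returns $0$, and all the running time is spent descending the two $k$-d trees and doing the matching insertions. It therefore suffices to bound the total number of nodes visited by the \textsf{Dominated} and \textsf{Insert} calls in the two phases, since each visited node contributes only $O(1)$ scalar comparisons.

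Next I would identify the shape of the $2$-$d$ tree built in Phase~1. Because the second coordinate of each point equals one minus the first, a point is determined by $x:=$ its first coordinate, and every discriminator test --- on coordinate~$1$ or on coordinate~$2=1-x$ --- amounts to comparing first coordinates, the test at odd levels merely interchanging ``left'' and ``right''. Thus, up to relabelling of left/right children (which changes neither shape nor node depths), the tree is an ordinary binary search tree on the first coordinates inserted in the order $\v{p}_1,\dots,\v{p}_n$; in particular the subtree rooted at a node $\v{r}$ holds exactly the points whose first coordinates lie in some interval $I_{\v{r}}$, and the intervals of two sibling subtrees are disjoint. Since the points are i.i.d.\ uniform on the segment, the first coordinates are i.i.d.\ uniform on $[0,1]$ in a uniformly random order, so this tree is distributed as a random binary search tree on $n$ keys.

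The crux is that a call \textsf{Dominated}$(\v{r},\v{p}_i)$ then visits only $O(1+h_i)$ nodes, where $h_i$ is the depth at which $\v{p}_i$ is inserted into the tree of $\{\v{p}_1,\dots,\v{p}_{i-1}\}$. For a subtree $\v{s}$ with at least two points, $\v{u}_{\v{s}}=(\max_{\v{s}}x,\;1-\min_{\v{s}}x)$, so $\v{p}_i\prec\v{u}_{\v{s}}$ holds (almost surely, as $x(\v{p}_i)$ a.s.\ avoids the data values present) exactly when $x(\v{p}_i)$ lies in the open interval spanned by $\v{s}$; the recursion descends into a child only when $x(\v{p}_i)$ lies in that child's interval, and sibling intervals being disjoint, at most one child is entered per level. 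The visited nodes hence form an initial segment of a root-to-leaf path of length at most $h_i+1$, and \textsf{Insert}$(\v{r},1,\v{p}_i)$ likewise costs $O(h_i)$. Summing over $i=2,\dots,n$, Phase~1 costs $O(n)$ plus a constant times the internal path length of a random binary search tree on $n$ keys, which is $\Theta(n\log n)$ in expectation (indeed $\sim 2n\ln n$). Phase~2 repeats the computation with the records fed in the reverse order $\v{p}_n,\dots,\v{p}_1$, still a uniformly random permutation, so it too costs $O(n\log n)$; adding the two phases gives the theorem.

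The routine parts are the path-length asymptotics for random binary search trees and the constant-factor bookkeeping in the per-node comparison count. The step demanding the most care is the reduction in the previous paragraph --- checking that the bounding-box filtering inside \textsf{Dominated} collapses to an ordinary root-to-leaf search. This hinges on two facts: a subtree's (off-segment) upper bound dominates a query point exactly when the point's first coordinate lies in the subtree's spanned interval, and the two children of a node span disjoint intervals; the measure-zero coincidences (a fresh point equalling some $\min$ or $\max$ of a subtree) cause no difficulty because the sampling distribution is continuous.
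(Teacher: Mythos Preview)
Your argument is correct and reaches the same $O(n\log n)$ bound as the paper, but the mechanics differ in an instructive way.

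The paper does not carry out the explicit BST reduction. Instead it works with two regions attached to each node $\v{r}$: the set $D_{\v{r}}\subset D$ of query points dominated by the upper bound $\v{u}_{\v{r}}$, and the $k$-d tree cell $A_{\v{r}}$. It observes that \textsf{Dominated} is called at $\v{r}$ exactly when the query lies in $D_{\v{r}}$, so the expected number of calls for $\v{p}_{m+1}$ is $\tfrac{1}{|D|}\,\mb{E}\bigl[\sum_{\v{r}}|D_{\v{r}}|\bigr]$; then it uses the inclusion $D_{\v{r}}\subset A_{\v{r}}$ together with the fact that the cells $A_{\v{r}}$ at any fixed level are disjoint and cover at most $D$, to bound this by the expected height of the tree, $O(\log m)$.

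Your route is more concrete and pathwise: you identify the tree on the $1$-simplex with a random binary search tree on the first coordinates, note that $\v{p}_i\prec\v{u}_{\v{s}}$ is exactly the event that $x(\v{p}_i)$ falls in the interval spanned by the subtree $\v{s}$, and conclude that \textsf{Dominated} descends into at most one child per level---hence the visited nodes form an initial segment of the insertion path of length $\le h_i+1$. Summing the $h_i$ gives the internal path length of a random BST, which is $\Theta(n\log n)$.

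Both proofs rest on the same geometric fact (for $d=2$ the ``dominated-by-$\v{u}$'' regions of sibling subtrees are disjoint), but they exploit it differently: the paper via a measure-averaging inequality, you via a direct one-branch-per-level combinatorial observation. Your argument is slightly sharper pathwise (it ties the cost at step $i$ to the insertion depth rather than to the tree height) and is entirely elementary; the paper's formulation is set up with an eye toward $d\ge 3$, where $D_{\v{r}}\subset A_{\v{r}}$ can fail and the question remains open.
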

We leave open the analysis for the case when $d\ge3$.
\begin{proof} Since all points in the sample are maxima, the
expected number of comparisons used in the first phase and that in
the second phase are the same. Thus, we focus on the first phase.

Assume that $\{\v{p}_1, \dots, \v{p}_m\}$ have been stored in a
$k$-d tree. We consider the number of comparisons that $\v{p}_{m+1}$
may involve inside the two procedures of the for-loop:
\textsf{Insert} and \textsf{Dominated}. The expected number of
comparisons used in \textsf{Insert} is of order
\[
    O(\text{the expected depth of the $k$-d tree})
    =O(\log m),
\]
since the $k$-d tree is essentially a binary search tree (see
\cite{Bentley75}).

We now estimate the expected number of comparisons used in
\textsf{Dominated}. Since at most three vector comparisons are
involved in the procedure \textsf{Dominated}, we analyze the number
of times $T_{m}$ the procedure \textsf{Dominated} is called. To
complete the proof, we show that $\mb{E}(T_m)=O(\log m)$.

Obviously, \textsf{Dominated}$(\v{r},\v{p}_{m+1})$ is called when
$\v{p}_{m+1}\prec \v{u}_{\v{r}}$. Thus, the number of times
\textsf{Dominated} is called is equal to the number of nodes $\v{r}$
such that $\v{p}_{m+1}\prec \v{u}_{\v{r}}$. Let $D_{\v{r} }\subset
D$ be the region that $\v{u}_{\v{r}}$ covers. Then the probability
of the event $\v{p}_{m+1}\prec \v{u}_{\v{r}}$ conditioning on the
$k$-d tree built from $\{\v{p}_1, \dots, \v{p}_m\}$ equals $\left|
D_{\v{r}}\right| /\left| D\right| $. Thus
\[
    \mb{E}(T_{m})=\frac{1}{\left| D\right|}
    \mb{E}\left(\sum_{\v{r}}
    \left| D_{ \v{r}}\right|\right) ,
\]
where the summation runs over all nodes and the expectation is taken
with respect to the $k$-d tree for $\{\v{p}_1, \dots, \v{p}_m\}$. To
estimate $\sum_{\v{r}}\left| D_{\v{r}}\right|$, we consider
$A_{\v{r} }\subset D$, the possible ranges induced by the nodes of
the subtrees rooted at $\v{r}$. The precise definition is as
follows. Define $A_{\v{r}}:=D$ when $\v{r}$ is the root. If
$\v{r}.\v{left}$ ($\v{r}.\v{right}$) represents the point at the
root node of the left (right) subtree of $\v{r}$, respectively, then
\begin{align*}
    \left\{\begin{array}{l}
    A_{\v{r}.\v{left}} :=A_{\v{r}}\cap [0,1]^{j-1}
    \times[0,x_{j})\times[0,1]^{d-j}, \\
    A_{\v{r}.\v{right}} :=A_{\v{r}}\cap [0,1]^{j-1}
    \times[x_{j},1]\times [0,1]^{d-j},
    \end{array}\right. \qquad(j=1,\dots,d),
\end{align*}
where $d=2$, the $j$-th coordinate is the discriminator of node
$\v{r}$ and $\v{r}=(x_{1},x_{2},\ldots ,x_{d})$.

Since the union of $A_{\v{r}}$ in the same level of the $k$-d tree
is at most $D$ and $D_{\v{r}}\subset A_{\v{r}}$ (see
Figure~\ref{d2}), we have
\[
    \mb{E}(T_m)
    \le \frac{1}{\left| D\right|}
    \mb{E}\left(\sum_{\v{r}}
    \left| A_{ \v{r}}\right|\right)
    \le \text{ the expected depth of the }
    k\text{-}d\text{ tree}
    =O(\log m).
\]
\end{proof}
Note that $A_{\v{r}}$ is determined by $\v{r}$ and its ancestors; in
contrast, $D_{\v{r}}$ is determined by $\v{r}$ and its offsprings.

\begin{figure}[!htbp]
\centering
\begin{tikzpicture} [scale=0.35]
\draw [densely dotted,line width=1.2pt] (4.2,-4.2)--(19.2,-15.7) ;
\draw [densely dotted,line width=0.6pt] (16,-6.87)--(16,-13.2) ;
\draw [densely dotted,line width=0.6pt] (16,-6.87)--(7.7,-6.87) ;
\draw [|-|,line width=2pt,blue] (4.2,-4.2)--(19.2,-15.7) ;%
\draw [|-|,line width=1.5pt,red] (7.59,-6.9)--(15.94,-13.35) ;
\filldraw (16,-6.87) circle (5pt);%
\filldraw (16,-13.2) circle (5pt);%
\filldraw (7.7,-6.87) circle (5pt);%
\filldraw (10.5,-9) circle (5pt);%
\filldraw (14,-11.68) circle (5pt);%
\draw (16.5,-5.8) node[black]{$\v{u_r}$};%
\draw (14,-10.68) node[black]{$\v{r}$};%
\draw (11,-11) node[red]{$\v{D_r}$};%
\draw (4.5,-6.5) node[blue]{$\v{A_r}$};
\end{tikzpicture}
\caption{A possible configuration of $A_{\v{r}}$ and $D_{\v{r}}$ for
$d=2$.} \label{d2}
\end{figure}
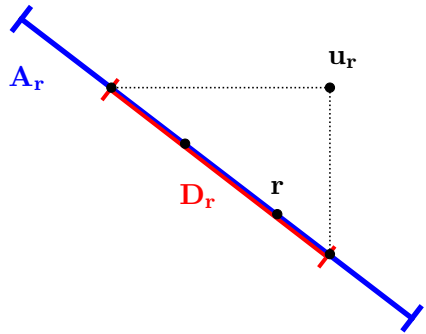

For $d\ge 3$, the expected time-complexity remains open. However,
simulations suggest that for fixed $d$ the expected time be of order
$O(n (\log n)^c)$ for some $c>0$; see Figure~\ref{effect}. On the
other hand, for fixed $n$ and increasing $d$, the expected number of
comparisons appears to be of order $O\left( dn\log n\right)$.
\begin{figure}[!ht]
\centering
\begin{tikzpicture}[scale=0.7]
\draw(-3.00,5.00)node{\normalsize
$\frac{\sum_{i=1}^{n}T_i}{n\log_2{n}}$};
\draw(5.00,-1.50)node{\normalsize$k$};%
\draw[line width=1pt,->](0,0)--(11.40,0); \foreach \x/\k in
{0.20/10,2.20/12,4.20/14,6.20/16,8.20/18,10.20/20} {%
\draw[line width=1pt](\x,-0.2)--(\x,0);
\draw(\x,-0.5)circle(0pt)node[black]{\normalsize$\k$}; }%
\foreach \x in {1.20,3.20,5.20,7.20,9.20}%
\draw[line width=1pt](\x,-0.1)--(\x,0);%
\draw[line width=1pt,->](0,0)--(0,10.80);%
\foreach \y/\k in
{0.00/0,1.11/2,2.22/4,3.33/6,4.44/8,5.56/10,6.67/12,
7.78/14,8.89/16,10.00/18} {%
\draw[line width=1pt](-0.2,\y)--(0,\y);
\draw(-1,\y)circle(0pt)node[black]{\normalsize$\k$}; }%
\foreach \y in {0.56,1.67,2.78,3.33,3.89,5.00,6.11,7.22,8.33,9.44}
\draw[line width=1pt](-0.1,\y)--(0,\y);%
\draw[line width=1pt,color=red,mark=o,mark size=3pt,draw=red] plot
coordinates{(0.20,0.81)(1.20,1.08)(2.20,1.32)(3.20,1.76)
(4.20,2.14)(5.20,2.83)(6.20,3.52)(7.20,4.53)(8.20,5.74)
(9.20,7.32)(10.20,9.52)}node[right=4pt]{{\normalsize\color{red}$d=10$}};
\draw[line width=1pt,color=blue,mark=o,mark size=3pt,draw=blue] plot
coordinates{(0.20,0.66)(1.20,0.73)(2.20,0.80)(3.20,0.87)
(4.20,0.93)(5.20,1.00)(6.20,1.06)(7.20,1.12)(8.20,1.17)
(9.20,1.22)(10.20,1.27)}node[right=4pt]{{\normalsize\color{blue}$d=3$}};
\draw[line width=1pt,color=green,mark=o,mark size=3pt,draw=green]
plot coordinates{(0.20,0.76)(1.20,0.90)(2.20,1.00)(3.20,1.16)
(4.20,1.33)(5.20,1.49)(6.20,1.66)(7.20,1.83)(8.20,2.05)
(9.20,2.20)(10.20,2.34)}node[right=4pt]{{\normalsize\color{green}$d=4$}};
\draw[line width=1pt,color=black,mark=o,mark size=3pt,draw=black]
plot coordinates{(0.20,0.81)(1.20,0.98)(2.20,1.18)(3.20,1.40)
(4.20,1.64)(5.20,1.89)(6.20,2.25)(7.20,2.54)(8.20,2.98)
(9.20,3.34)(10.20,3.73)}node[right=4pt]{{\normalsize\color{black}$d=5$}};
\draw[line width=1pt,color=red,mark=o,mark size=3pt,draw=red] plot
coordinates{(0.20,0.84)(1.20,1.06)(2.20,1.26)(3.20,1.54)
(4.20,1.86)(5.20,2.23)(6.20,2.69)(7.20,3.20)(8.20,3.77)
(9.20,4.43)(10.20,5.33)}node[right=4pt]{{\normalsize\color{red}$d=6$}};
\draw[line width=1pt,color=blue,mark=o,mark size=3pt,draw=blue] plot
coordinates{(0.20,0.85)(1.20,1.07)(2.20,1.32)(3.20,1.64)
(4.20,2.03)(5.20,2.48)(6.20,3.01)(7.20,3.70)(8.20,4.51)
(9.20,5.46)(10.20,6.53)}node[right=4pt]{{\normalsize\color{blue}$d=7$}};
\draw[line width=1pt,color=green,mark=o,mark size=3pt,draw=green]
plot coordinates{(0.20,0.81)(1.20,1.08)(2.20,1.37)(3.20,1.70)
(4.20,2.12)(5.20,2.67)(6.20,3.28)(7.20,4.13)(8.20,5.04)
(9.20,6.21)(10.20,7.71)}node[right=4pt]{{\normalsize\color{green}$d=8$}};
\draw[line width=1pt,color=black,mark=o,mark size=3pt,draw=black]
plot coordinates{(0.20,0.82)(1.20,1.07)(2.20,1.35)(3.20,1.74)
(4.20,2.24)(5.20,2.80)(6.20,3.52)(7.20,4.36)(8.20,5.56)
(9.20,6.87)(10.20,8.52)}node[right=4pt]{{\normalsize\color{black}$d=9$}};
\end{tikzpicture}
\caption{Simulation results of the total number of times the
procedure \textsf{Dominated} is called for in the first phase for
$d=3,4,5,\dots,10$ and $n=2^k$ for $k$ from $10 $ to $20$. Here we
plot $\frac{\sum_{i=1}^nT_i}{n\log_2n}$ against $k=\log _2n$.}
\label{effect}
\end{figure}
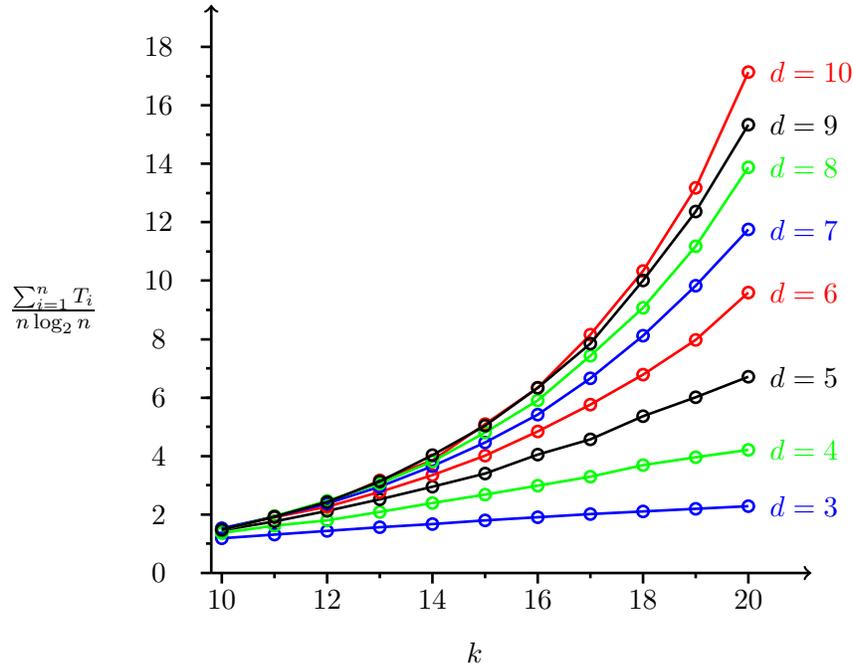

One way of seeing why our algorithm suffers less from the so-called
``curse of dimensionality" than other algorithms in such extreme
cases is as follows. As is obvious from the proof of
Theorem~\ref{thm-ec}, the time complexity is proportional to the
order of $|D_{\v{r}}|/ |A_{\v{r}}|$. The more slender $A_{\v{r}}$
is, the larger $|D_{\v{r}}|/ |A_{\v{r}}|$ becomes. All four possible
patterns of $A_{\v{r}}$ for $d=3$ are shown in Figure~\ref{d3}. The
slenderness does not seem to worsen rapidly as there is some sort of
counter-balancing process at play; see Figure~\ref{d3}.

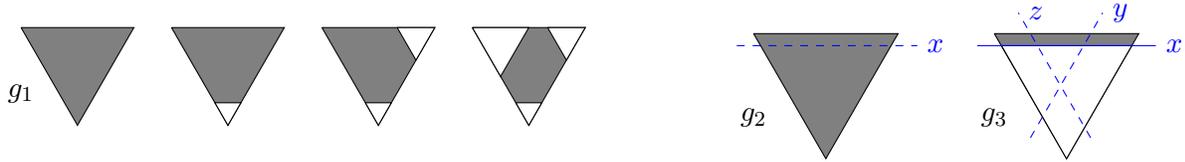
\begin{figure}[!ht]
\begin{minipage}{0.5\textwidth}
\centering
\begin{tikzpicture} [scale=0.25]
\filldraw[fill=gray] (-8,0) -- (-2,0) -- (-5,-5.2) -- (-8,0);
\draw (-8,-3.5) node {$g_1$};%
\filldraw[fill=gray] (0,0) -- (6,0) -- (3,-5.2) -- (0,0);
\filldraw[fill=white] (2.31,-4) -- (3.69,-4) -- (3,-5.2) --
(2.31,-4);%
\filldraw[fill=gray] (8,0) -- (14,0) -- (11,-5.2) -- (8,0);
\filldraw[fill=white] (10.31,-4) -- (11.69,-4) -- (11,-5.2)
--(10.31,-4);%
\filldraw[fill=white] (12,0) -- (14,0) -- (13,-1.732) -- (12,0);
\filldraw[fill=gray] (16,0) -- (22,0) -- (19,-5.2) -- (16,0);
\filldraw[fill=white] (18.31,-4) -- (19.69,-4) -- (19,-5.2)
--(18.31,-4);%
\filldraw[fill=white] (20,0) -- (22,0) -- (21,-1.732)
--(20,0); %
\filldraw[fill=white] (16,0) -- (19,0) -- (17.5,-2.598) --(16,0);
\end{tikzpicture}
\end{minipage}
\begin{minipage}{0.5\textwidth}
\centering
 \begin{tikzpicture} [scale=0.32]
\filldraw[fill=gray] (-10,0) -- (-4,0) -- (-7,-5.2) -- (-10,0);
\filldraw[fill=gray] (0,0) -- (6,0) -- (3,-5.2) -- (0,0);
\filldraw[fill=white](0.3,-0.5) -- (5.71,-0.5) -- (3,-5.2) --
(0.3,-0.5);%
\draw (-10,-3.5) node {$g_2$};%
\draw (0,-3.5) node {$g_3$}; \color{blue}%
\draw[dashed] (-10.7,-0.5) -- (-3.2,-0.5) node[right] {$x$};
\draw(-0.711,-0.5) -- (6.71,-0.5) node[right] {$x$};%
\draw[dashed] (1.5,-4.33) -- (4.5,0.866) node[right] {$y$};
\draw[dashed] (4,-4.33) -- (1,0.866) node[right] {$z$};
\end{tikzpicture}
\end{minipage}
\caption{Here $d=3$. All four possible configurations of $A_{\v{r}}$
are shown on the left (the four smaller, We can see how $A_{\v{r}}$
tends to keep from getting too slender by the interaction of
$x$-axis, $y$-axis and $z$-axis. Take the leftmost region (graph
$g_1$) for instance. Whenever $A_{\v{r}}$ is split less evenly by
$x$-axis (graph $g_2$), later splittings along $y$-axis or along
$z$-axis tend to counterbalance the effect caused by $x$-axis (graph
$g_3$).} \label{d3}
\end{figure}

\section{Applications}
\label{sec-app}

In this section, we apply algorithm \textsf{Maxima} to find
successively the maximal layers and to search for the longest common
subsequence of multiple sequences, respectively. In both cases, our
algorithms generally achieve better performance.

\subsection{Maximal layers}

The problem is to split the input set of points $\v{p}$ into layers
according to maxima. Let $\v{L}_k$ denote the $k$-th maximal layer
of $\v{p}$. Then $\v{L}_1=\v{Max}(\v{p})$ and
\[
    \v{L}_k:=\v{Max}
    \left(\v{p}\setminus\bigcup_{1\le i<k}
    \v{L} _i\right) ,\quad \text{for }k\ge 2.
\]

Maximal layers have been widely applied in multi-objective
optimization problems, and algorithms with $O(n \log n)$-time
complexity were known for finding the two- and three-dimensional
maximal layers; see \cite{BV08,BG04}.

By identifying the first few layers of maxima to preserve the
so-called elitism, Srinivas and Deb \cite{SD95} proposed a
multi-objective evolutionary algorithm, called non-dominated sorting
genetic algorithm (NSGA). This algorithm was later improved and
called NSGA-II \cite{Deb02}, which reduces the worst-case time
complexity from $O(dn^3) $ to $O(dn^2)$ and soon became extremely
popular. Omitting the details of the corresponding genetic
algorithms, the NSGA-II algorithm \cite{Deb02} for finding the
maximal layers can be extracted and summarized in the following two
steps.

\begin{description}

\item[Step 1:]  For each point $\v{p}_i$, compute the
number of points that dominate it $n_i :=\left| \{
\v{p}_j:\v{p}_i\prec \v{p}_j\}\right|$ ($n_i$ will be referred to as
the rank of the point $\v{p}_i$) and the set of points dominated by
it $\v{S}_i :=\{\v{p}_j:\v{p}_j\prec \v{p}_i\}$.

\item[Step 2:]  Then the maximal layers can be determined by $n_i$
and $\v{S}_i$ as follows. The first layer $\v{L}_1$ contains the
points with zero rank. For $k\ge 2$, remove $\v{L}_{k-1}$ and update
the rank $n_i$ by using $\v{S}_i$. Then, $\v{L}_k$ is the set of the
points with zero rank among all points that remain.
\end{description}

The running time is obviously $O(dn^2)$ since all pairs of points
are compared.

A straightforward way to compute the maximal layers is to find
successively the maxima after the removal of each layer.

\quad

\textbf{Algorithm} \textsf{Peeling}\\
\hspace*{1cm}//\textbf{Input}: A sequence of points
$\{\v{p}_1,\dots, \v{p}_{n}\}$\\
\hspace*{1cm}//\textbf{Output}: Maximal layers
$\v{L}_1,\v{L}_2,\dots$ \\%
\hspace*{1cm}\textbf{begin}\\
\hspace*{2cm} $k:=0;\v{q}:=\{\v{p}_1,\dots
,\v{p}_{n}\}$ \\
\hspace*{2cm} \textbf{while} ($\left| \v{q}\right| $ $>0$)\\
\hspace*{3cm}$k:=k+1$ \\
\hspace*{3cm}$\v{L}_k:=$\textsf{\ Find-Maxima} $(\v{q})$\\
\hspace*{3cm}$\v{q}:=\v{q}-\v{L}_k$ \\
\hspace*{1cm}\textbf{end}

\quad

Algorithm \textsf{Peeling} is simple and efficient in average
situations, even though the worst-case complexity is $O(n^3)$. Any
maxima-finding algorithm can be used for the procedure
$\textsf{Find-Maxima} (\v{q})$. To study the average behavior of
algorithm \textsf{Peeling}, we compare two procedures for
$\textsf{Find-Maxima}$: algorithm \textsf{Maxima} and algorithm
\textsf{Naive}. Algorithm \textsf{Naive} finds maxima using pairwise
comparisons.

\quad

\textbf{Algorithm} \textsf{Naive}\\
\hspace*{1cm}//\textbf{Input}: A set of points
$\v{q}=\{\v{q} _1,\dots ,\v{q}_{n}\}$\\
\hspace*{1cm}//\textbf{Output}: $\v{M} =
\v{Max}(\v{q})$. \\%
\hspace*{1cm} \textbf{begin} \\
\hspace*{2cm} $\v{M} := \{\}$ \\
\hspace*{2cm} \textbf{for} $i:=1$ \textbf{to} $n$ \textbf{do}\\
\hspace*{3cm} \textbf{for} $j:=1$ \textbf{to} $n$ \textbf{do}\\
\hspace*{4cm} \textbf{if} ($i\neq j$ and
$\v{q}_i \v{\prec q}_j$ ) \textbf{then} break\\
\hspace*{4cm} \textbf{if} ($j=n$) \textbf{then}
insert $\v{q}_i$ into $\v{M}$\\ %
\hspace*{1cm} \textbf{end}

\begin{thm} If $\v{p}_1,\dots ,\v{p}_n$ are independently
and uniformly sampled from any given region in $\mb{R}^d$, then the
expected running time of algorithm \textsf{Peeling} using algorithm
\textsf{Naive} is $O\left( n^2\log (K+1)\right)$, conditioned on the
number of maximal layers $K$.
\end{thm}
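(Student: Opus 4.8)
The plan is to express the running time of \textsf{Peeling}+\textsf{Naive} through the sizes of the successive maximal layers, and then to use the random order of the input to control how long each inner loop of \textsf{Naive} actually runs. Write $L_k=|\v{L}_k|$ and $n_k=L_k+L_{k+1}+\cdots+L_K$ (so $n_1=n$), and let $S_k$ be the set of the $n_k$ points lying in layers $k,k+1,\dots,K$; this is exactly the point set passed to the $k$-th call of \textsf{Naive} inside \textsf{Peeling}, kept in its original relative order. For any fixed order of $S_k$, the cost of \textsf{Naive}$(S_k)$ is, up to the constant factor $d$, $\sum_{\v{q}\in S_k}\tau(\v{q})$, where $\tau(\v{q})$ is the position in the list of the first point dominating $\v{q}$, with the convention $\tau(\v{q})=n_k$ when $\v{q}$ has no dominator in $S_k$, i.e.\ when $\v{q}\in\v{L}_k$. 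The remaining bookkeeping in \textsf{Peeling} costs $O(nK)=O(n^2)$, so it suffices to bound $\mb{E}\bigl(\sum_{k\le K}\sum_{\v{q}\in S_k}\tau(\v{q})\bigr)$ conditioned on $K$.

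First I would condition on the unordered point set $S=\{\v{p}_1,\dots,\v{p}_n\}$. Since the $\v{p}_i$ are i.i.d.\ and a.s.\ distinct, the input order is then a uniform random permutation of $S$, and the number of layers $K$ together with the whole layer partition (hence each $S_k$, $n_k$, $L_k$) are functions of $S$ alone, while the order induced on each $S_k$ is again uniform. A standard computation then gives, for $\v{q}\in S_k$ having exactly $r$ dominators inside $S_k$, that the expectation of $\tau(\v{q})$ over the random order equals $(n_k+1)/(r+1)$ when $r\ge1$, while $\tau(\v{q})=n_k$ when $r=0$. The combinatorial core is that depth forces many dominators: in \emph{any} configuration a point in the $\ell$-th maximal layer is dominated by at least $\ell-1$ distinct points. (At the step when that point's layer is peeled it was, one step earlier, not yet a maximum, hence dominated by some still-surviving point, which must sit in layer $\ell-1$; iterating produces a strictly increasing chain $\v{q}\prec\v{q}_1\prec\cdots\prec\v{q}_{\ell-1}$, every member of which dominates $\v{q}$.) Applied inside $S_k$: a point with global layer index $X_i>k$ lies in layer $X_i-k+1$ of $S_k$, hence has at least $X_i-k$ dominators there, so its expected inner-loop length is at most $(n_k+1)/(X_i-k+1)\le 2n/(X_i-k+1)$.

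Summing, and separating the $L_k$ points of layer exactly $k$ (each contributing $n_k$) from the rest,
\[
    \mb{E}\!\left(\sum_{k\le K}\sum_{\v{q}\in S_k}\tau(\v{q})\ \Big|\ S\right)
    \;\le\; \sum_{k\le K}L_k n_k
    \;+\; 2n\sum_{k\le K}\ \sum_{i:\,X_i>k}\frac{1}{X_i-k+1}.
\]
The first sum is at most $n\sum_{k}L_k=n^2$. In the second, exchanging the order of summation turns the double sum into $\sum_{i}\sum_{1\le k<X_i}(X_i-k+1)^{-1}=\sum_i(H_{X_i}-1)\le nH_K$, where $H_K$ is the $K$-th harmonic number and we used $X_i\le K$. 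Hence $\mb{E}(\text{running time}\mid S)=O\bigl(n^2(1+H_K)\bigr)=O(n^2\log(K+1))$, a bound depending on $S$ only through $K$; averaging over all $S$ with $K$ held fixed yields the theorem.

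The main obstacle is the middle step: establishing (and then exploiting) that a point at depth $\ell$ in the maximal-layer order necessarily has $\ell-1$ dominators, which is what makes \textsf{Naive}'s early-break behaviour pay off, and, slightly more delicately, justifying that we may take the expectation over the random input order \emph{while conditioning on} $K$ — this is legitimate precisely because $K$, and indeed the entire layer partition, is measurable with respect to the unordered set $S$, on which the order is uniformly distributed. The surrounding manipulations are routine summations.
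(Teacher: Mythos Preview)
Your proof is correct and follows essentially the same route as the paper's: condition on the layer structure, use that a point in global layer $i$ has at least $i-k$ dominators inside $S_k$, bound the expected inner-loop length of \textsf{Naive} by roughly $n_k/(i-k)$ via the uniform random order, and sum to obtain $O(n^2 H_K)$. You are more explicit than the paper on the points it leaves implicit---the chain argument giving the $i-k$ dominators, the exact formula $(n_k+1)/(r+1)$ for the expected position of the first dominator, and the measurability justification that conditioning on $K$ (or on the whole unordered set) still leaves the order uniform---but the decomposition and the arithmetic are the same.
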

\begin{proof}
Consider the event that the total number of layers is $K$ and the
number of points in the $i$-th layer $\v{L}_i$ is $\ell_i$ for $1\le
i\le K$.

We now fix $k$. At the moment of computing $\v{L}_k$, the total
number of remaining points is equal to $N_k :=
\sum_{i=k}^{K}\ell_i$. If a point $\v{p}$ is in the $i$-th layer for
$i\ge k$, then the number of points that dominate $\v{p}$ is at
least $i-k$. Thus, the expected number of comparisons that $\v{p}$
involves in the loop for computing the $k$-th layer maxima is upper
bounded by
\[
    \le \left\{\begin{array}{ll}
        N_k, & \text{if }i=k, \\
        N_k/(i-k), & \text{if }i>k,
    \end{array}\right.
\]
since the remaining points preserve the randomness. Summing over all
$\v{p}$ and $k$, we obtain the upper bound for the expected number
of comparisons used
\begin{align*}
    \sum_{k=1}^K \ell_kN_k+\sum_{k=1}^K
    \sum_{i=k+1}^K \frac{\ell_i N_k}{i-k}
    &\le n^2 + n\sum_{i=2}^K\sum_{k=1}^{i-1}
    \frac{\ell_i}{i-k} \\
    &\le n^2 + n^2\left(1+\log K\right).
\end{align*}
This completes the proof.
\end{proof}
Note that the proof also extends to more general non-uniform
distributions.

We compare the numbers of scalar comparisons used by the following
three algorithms for finding the maximal layers: Deb et al.'s
algorithm \cite{Deb02}, algorithm \textsf{Peeling} using
\textsf{Maxima}, and algorithm \textsf{Peeling} using
\textsf{Naive}. The simulation results are shown in
Figure~\ref{layers}. Note that we reverse the order of the remainder
after a layer is found to make the algorithm more efficient. It is
clear that algorithm \textsf{Peeling} using \textsf{Maxima}
outperforms generally the other two, especially for higher
dimensional samples in large data sets.

\begin{figure}[tbp]
\begin{minipage}{0.3\textwidth}
\begin{tikzpicture}[scale=0.35]
\draw(-1.00,11.00)node{\tiny \#(comparisons)};
\draw(5.80,-1.50)node{\tiny$d$};
\draw(1.67,-1.50)node{\tiny$(n=100)$};%
\draw[line width=1pt](0,0)--(10.20,0);%
\foreach \x/\k in
{0.20/2,1.45/3,2.70/4,3.95/5,5.20/6,6.45/7,7.70/8,8.95/9,10.20/10}{
    \draw[line width=1pt](\x,-0.2)--(\x,0);
    \draw(\x,-0.5)circle(0pt)node[black]{\tiny$\k$};
}%
\foreach \x in
{0.45,0.70,0.95,1.20,1.70,1.95,2.20,2.45,2.95,3.20,3.45,3.70,4.20,
4.45,4.70,4.95,5.45,5.70,5.95,6.20,6.70,6.95,7.20,7.45,7.95,8.20,
8.45,8.70,9.20,9.45,9.70,9.95}
\draw[line width=1pt](\x,-0.1)--(\x,0);%
\draw[line width=1pt](0,0)--(0,10.00);%
\foreach \y/\k in
{0.00/0.0e+00,1.67/5.0e+03,3.33/1.0e+04,5.00/1.5e+04,
6.67/2.0e+04,8.33/2.5e+04,10.00/3.0e+04}{
    \draw[line width=1pt](-0.2,\y)--(0,\y);
    \draw(-2,\y)circle(0pt)node[black]{\tiny$\k$};
}%
\foreach \y in
{0.33,0.67,1.00,1.33,2.00,2.33,2.67,3.00,3.67,4.00,4.33,
4.67,5.00,5.33,5.67,6.00,6.33,6.67,7.00,7.33,7.67,8.00,
8.33,8.67,9.00,9.33,9.67}
\draw[line width=1pt](-0.1,\y)--(0,\y);%
\draw[line width=1pt,mark options={color=black},mark=o,mark
size=5pt,color=red] plot coordinates{(0.20,4.95)(1.45,5.78)
(2.70,6.24)(3.95,6.46)(5.20,6.62)(6.45,6.63)(7.70,6.64)
(8.95,6.69)(10.20,6.67)}node[above left]{{\tiny\color{red}Deb}};
\draw[line width=1pt,mark options={color=black},mark=pentagon,mark
size=5pt,color=blue] plot coordinates{(0.20,4.98)(1.45,4.73)
(2.70,4.92)(3.95,4.91)(5.20,5.13)(6.45,5.50)(7.70,5.78)
(8.95,6.14)(10.20,6.34)}node[below
left=6pt]{{\tiny\color{blue}\textsf{Naive}}};%
\draw[line width=1pt,mark options={color=black},mark=triangle,mark
size=5pt,color=green] plot coordinates{(0.20,2.66)(1.45,3.03)
(2.70,3.66)(3.95,4.46)(5.20,5.61)(6.45,6.30)(7.70,7.24)
(8.95,8.27)(10.20,8.95)}node[above
left]{{\tiny\color{green}\textsf{Maxima}}};
\end{tikzpicture}\end{minipage}
\hspace{0.3cm}
\begin{minipage}{0.3\textwidth}
\begin{tikzpicture}[scale=0.35]
\draw(-1.00,11.00)node{\tiny \#(comparisons)};
\draw(5.80,-1.50)node{\tiny$d$};
\draw(1.67,-1.50)node{\tiny$(n=1000)$};%
\draw[line width=1pt](0,0)--(10.20,0);%
\foreach \x/\k in
{0.20/2,1.45/3,2.70/4,3.95/5,5.20/6,6.45/7,7.70/8,8.95/9,10.20/10}{
    \draw[line width=1pt](\x,-0.2)--(\x,0);
    \draw(\x,-0.5)circle(0pt)node[black]{\tiny$\k$};
}%
\foreach \x in
{0.45,0.70,0.95,1.20,1.70,1.95,2.20,2.45,2.95,3.20,3.45,
3.70,4.20,4.45,4.70,4.95,5.45,5.70,5.95,6.20,6.70,6.95,
7.20,7.45,7.95,8.20,8.45,8.70,9.20,9.45,9.70,9.95}
\draw[line width=1pt](\x,-0.1)--(\x,0);%
\draw[line width=1pt](0,0)--(0,10.00);%
\foreach \y/\k in
{0.00/0.0e+00,2.50/5.0e+05,5.00/1.0e+06,7.50/1.5e+06,10.00/2.0e+06}{
    \draw[line width=1pt](-0.2,\y)--(0,\y);
    \draw(-2,\y)circle(0pt)node[black]{\tiny$\k$};
}%
\foreach \y in
{0.50,1.00,1.50,2.00,3.00,3.50,4.00,4.50,5.50,6.00,6.50,
7.00,8.00,8.50,9.00,9.50}
\draw[line width=1pt](-0.1,\y)--(0,\y);%
\draw[line width=1pt,mark options={color=black},mark=o,mark
size=5pt,color=red] plot coordinates{(0.20,7.49)(1.45,8.74)
(2.70,9.36)(3.95,9.67)(5.20,9.83)(6.45,9.91)(7.70,9.96)
(8.95,9.96)(10.20,9.98)}node[above
left]{{\tiny\color{red}Deb}};%
\draw[line width=1pt,mark options={color=black},mark=pentagon,mark
size=5pt,color=blue] plot coordinates{(0.20,7.34)(1.45,6.66)
(2.70,6.76)(3.95,6.92)(5.20,7.31)(6.45,7.58)(7.70,7.87)
(8.95,8.19)(10.20,8.55)}node[below
left=6pt]{{\tiny\color{blue}\textsf{Naive}}};%
\draw[line width=1pt,mark options={color=black},mark=triangle,mark
size=5pt,color=green] plot coordinates{(0.20,1.58)(1.45,1.54)
(2.70,1.76)(3.95,2.14)(5.20,2.56)(6.45,2.92)(7.70,3.43)
(8.95,3.85)(10.20,4.33)}node[above
left]{{\tiny\color{green}\textsf{Maxima}}};
\end{tikzpicture}\end{minipage}
\hspace{0.3cm}
\begin{minipage}{0.3\textwidth}
\begin{tikzpicture}[scale=0.35]
\draw(-1.00,11.00)node{\tiny \#(comparisons)};
\draw(5.80,-1.50)node{\tiny$d$};
\draw(1.67,-1.50)node{\tiny$(n=10000)$};%
\draw[line width=1pt](0,0)--(10.20,0);%
\foreach \x/\k in
{0.20/2,1.45/3,2.70/4,3.95/5,5.20/6,6.45/7,7.70/8,8.95/9,10.20/10}{
    \draw[line width=1pt](\x,-0.2)--(\x,0);
    \draw(\x,-0.5)circle(0pt)node[black]{\tiny$\k$};
}%
\foreach \x in {0.45,0.70,0.95,1.20,1.70,1.95,2.20,2.45,2.95,3.20,
3.45,3.70,4.20,4.45,4.70,4.95,5.45,5.70,5.95,6.20,
6.70,6.95,7.20,7.45,7.95,8.20,8.45,8.70,9.20,9.45,9.70,9.95}
\draw[line width=1pt](\x,-0.1)--(\x,0);%
\draw[line width=1pt](0,0)--(0,10.00);%
\foreach \y/\k in
{0.00/0.0e+00,2.50/5.0e+07,5.00/1.0e+08,7.50/1.5e+08,10.00/2.0e+08}{
    \draw[line width=1pt](-0.2,\y)--(0,\y);
    \draw(-2,\y)circle(0pt)node[black]{\tiny$\k$};
}%
\foreach \y in {0.50,1.00,1.50,2.00,3.00,3.50,4.00,4.50,5.50,6.00,
6.50,7.00,8.00,8.50,9.00,9.50}
\draw[line width=1pt](-0.1,\y)--(0,\y);%
\draw[line width=1pt,mark options={color=black},mark=o,mark
size=5pt,color=red] plot coordinates{(0.20,7.50)(1.45,8.75)
(2.70,9.37)(3.95,9.69)(5.20,9.84)(6.45,9.92)(7.70,9.96)
(8.95,9.98)(10.20,9.99)}node[above left]{{\tiny\color{red}Deb}};
\draw[line width=1pt,mark options={color=black},mark=pentagon,mark
size=5pt,color=blue] plot coordinates{(0.20,7.25)(1.45,6.23)
(2.70,6.17)(3.95,6.36)(5.20,6.65)(6.45,6.92)(7.70,7.20)
(8.95,7.48)(10.20,7.73)}node[below
left=6pt]{{\tiny\color{blue}\textsf{Naive}}};%
\draw[line width=1pt,mark options={color=black},mark=triangle,mark
size=5pt,color=green] plot coordinates{(0.20,0.48)(1.45,0.38)
(2.70,0.42)(3.95,0.50)(5.20,0.61)(6.45,0.75)(7.70,0.89)
(8.95,1.03)(10.20,1.20)}node[above
left]{{\tiny\color{green}\textsf{Maxima}}};
\end{tikzpicture}\end{minipage}
\caption{Simulation of Deb's algorithm, and the peeling method with
algorithm \textsf{Naive} and algorithm \textsf{Maxima},
respectively. We compare the number of scalar comparisons used in
the algorithms. Here the sample size $n=10^2,10^3,10^4$ and the
points are generated uniformly from $[0,1]^d$ for $d=2,3,\dots
,10$.} \label{layers}
\end{figure}
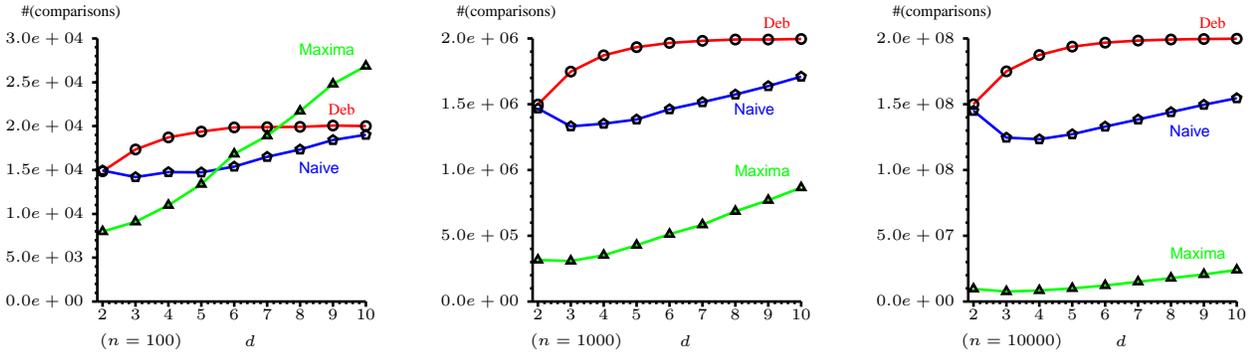

\subsection{The multiple longest common subsequence problem}

Given two or more strings (or sequences), the longest common
subsequence (LCS for short) problem is to determine the longest
common subsequence obtained by removing zero or more symbols from
each string. For example, if $\v{s}_1=aabbc$ and $\v{s}_2=abac$ then
$LCS(\v{s}_1,\v{s}_2)$, the LCS of $\v{s}_1$ and $\v{s}_2$, is
$abc$. The LCS of sequences is widely used in computational biology,
notably in DNA and protein sequence analysis.

Various algorithms for computing an LCS between two strings were
derived in the literature, but much fewer algorithms are devoted to
the LCS of more than two strings. Hakata and Imai \cite{HI98}
proposed a method for solving efficiently the multiple LCS problem.
The method is essentially based on minima-finding.

Let $\v{s}_1=a_1a_2\cdots a_{n}$ and $\v{s}_2=b_1b_2\cdots b_{m}$ be
two strings. We say that $(i,j)$ is a \emph{match} if $a_i=b_j$.
Given two matches $(i_1,j_1)$ and $(i_2,j_2)$. If $i_1<i_2$ and
$j_1<j_2$ then
\[
    \text{LCS}(a_1\cdots a_{i_1},b_1\cdots b_{j_1})
    <\text{LCS}(a_1\cdots a_{i_2},b_1\cdots b_{j_2}).
\]
Thus, finding the LCS can be roughly regarded as finding the maximal
layers of all possible matches. However, the number of matches is
usually too large. The approach proposed in \cite{HI98} is to find
the layers one after another as follows. Assume we have found the
$k$-th layer, $C_k$, then the $(k+1)$-st layer is the minima of all
successors of $C_k$, where a match $(i_2,j_2)$ is called a
\emph{successor} of another match $(i_1,j_1) $ if $i_1<i_2$ and
$j_1<j_2$ and there is no match between them. The minima-finding
algorithm proposed in \cite{HI98} is an improvement over algorithm
\textsf{Naive}. The algorithm runs as follows.

\quad

\textbf{Algorithm} \textsf{Hakata-Imai} \\
\hspace*{1cm}//\textbf{Input}: A set of points
$\v{q}=\{\v{q}_1,\dots ,\v{q}_{n}\}$\\
\hspace*{1cm}//\textbf{Output}: $\v{M}$ contains minima of
$\v{q}$ \\%
\hspace*{1cm}\textbf{begin} \\
\hspace*{2cm} $\v{M} := \{\}$ \\
\hspace*{2cm} \textbf{for} $i:=1$ \textbf{to} $n$ \textbf{do} \\
\hspace*{3cm} \textbf{if} $\v{q}_i$ is unmarked \textbf{then} \\
\hspace*{4cm} \textbf{for} $j:=1$ \textbf{to} $n$ \textbf{do} \\
\hspace*{5cm} \textbf{if} $\v{q}_j$ is unmarked \textbf{then} \\
\hspace*{6cm} \textbf{if} ($\v{q}_i\v{\prec q}_j$)
\textbf{then } mark $\v{q}_j$ \\%
\hspace*{6cm} \textbf{if} ($\v{q}_j\v{\prec q}_i$)
\textbf{then } mark $\v{q}_i$ \\%
\hspace*{4cm} \textbf{if} $\v{q}_i$ is unmarked \textbf{then}
insert $\v{q}_i$ into $\v{M}$\\
\hspace*{1cm}
\textbf{end}

\quad

This algorithm is similar to the list algorithm if we consider
node-marking as a substitute of node-deletion.

We compare the performance of \textsf{Hakata-Imai} and
\textsf{Maxima} for the number of strings $3,5,7$ and alphabet sizes
$4,20$. See the experimental results in Figure \ref{lcs} where the
improvement achieved by our algorithm is visible.

\begin{figure}[tbp]
\begin{minipage}{0.3\textwidth}
\begin{tikzpicture}[scale=0.35]
\draw(-2.00,5.00)node{\tiny ratio};
\draw(6.67,-1.50)node{\tiny$n$};%
\draw(1.67,-1.50)node{\tiny($d=3$, $s=4$)};%
\draw[line width=1pt](0,0)--(10.20,0);%
\foreach \x/\k in {0.20/200,2.70/275,5.20/350,7.70/425,10.20/500} {
    \draw[line width=1pt](\x,-0.2)--(\x,0);
    \draw(\x,-0.5)circle(0pt)node[black]{\tiny$\k$};
}%
\draw[line width=1pt](0,0)--(0,10.00);%
\foreach \y/\k in {0.00/0,3.33/5,6.67/10,10.00/15} {%
    \draw[line width=1pt](-0.2,\y)--(0,\y);%
    \draw(-1,\y)circle(0pt)node[black]{\tiny$\k$};
}%
\foreach \y in
{0.67,1.33,2.00,2.67,4.00,4.67,5.33,6.00,7.33,8.00,8.67,9.33}
\draw[line width=1pt](-0.1,\y)--(0,\y);%
\draw[line width=1pt,color=red,mark=o,mark size=3pt,mark
options={color=black}] plot
coordinates{(0.20,3.22)(0.87,3.51)(1.53,3.87)(2.20,4.42)(2.87,4.66)
(3.53,5.10)(4.20,5.53)(4.87,6.07)(5.53,6.58)(6.20,7.22)(6.87,7.68)
(7.53,8.24)(8.20,8.62)(8.87,9.05)(9.53,9.66)(10.20,10.40)};
\end{tikzpicture}\end{minipage}
\hspace{0.3cm}
\begin{minipage}{0.3\textwidth}
\begin{tikzpicture}[scale=0.35]
\draw(-2.00,5.00)node{\tiny ratio};
\draw(6.67,-1.50)node{\tiny$n$};%
\draw(1.67,-1.50)node{\tiny($d=5$, $s=4$)};%
\draw[line width=1pt](0,0)--(10.20,0);%
\foreach \x/\k in {0.20/60,2.70/75,5.20/90,7.70/105,10.20/120} {
    \draw[line width=1pt](\x,-0.2)--(\x,0);%
    \draw(\x,-0.5)circle(0pt)node[black]{\tiny$\k$};%
}%
\draw[line width=1pt](0,0)--(0,10.00);%
\foreach \y/\k in {0.00/0,3.33/5,6.67/10,10.00/15} {
    \draw[line width=1pt](-0.2,\y)--(0,\y);
    \draw(-1,\y)circle(0pt)node[black]{\tiny$\k$};
}%
\foreach \y in
{0.67,1.33,2.00,2.67,4.00,4.67,5.33,6.00,7.33,8.00,8.67,9.33}
\draw[line width=1pt](-0.1,\y)--(0,\y);%
\draw[line width=1pt,color=red,mark=o,mark size=3pt,mark
options={color=black}] plot
coordinates{(0.20,2.09)(1.87,2.62)(3.53,3.20)(5.20,4.50)
(6.87,5.81)(8.53,6.82)(10.20,8.97)};
\end{tikzpicture}\end{minipage}
\hspace{0.3cm}
\begin{minipage}{0.3\textwidth}
\begin{tikzpicture}[scale=0.35]
\draw(-2.00,5.00)node{\tiny ratio};%
\draw(6.67,-1.50)node{\tiny$n$};%
\draw(1.67,-1.50)node{\tiny($d=7$, $s=4$)};%
\draw[line width=1pt](0,0)--(10.20,0);%
\foreach \x/\k in {0.20/50,2.70/58,5.20/65,7.70/72,10.20/80} {
    \draw[line width=1pt](\x,-0.2)--(\x,0);
    \draw(\x,-0.5)circle(0pt)node[black]{\tiny$\k$};%
}%
\draw[line width=1pt](0,0)--(0,10.00);%
\foreach \y/\k in {0.00/0,3.33/5,6.67/10,10.00/15} {
    \draw[line width=1pt](-0.2,\y)--(0,\y);
    \draw(-1,\y)circle(0pt)node[black]{\tiny$\k$};
}%
\foreach \y in
{0.67,1.33,2.00,2.67,4.00,4.67,5.33,6.00,7.33,8.00,8.67,9.33}
\draw[line width=1pt](-0.1,\y)--(0,\y);%
\draw[line width=1pt,color=red,mark=o,mark size=3pt,mark
options={color=black}] plot coordinates{(0.20,2.49)(1.87,2.94)
(3.53,3.54)(5.20,4.61)(6.87,5.86)(8.53,7.30)(10.20,9.12)};
\end{tikzpicture}\end{minipage}
\begin{minipage}{0.3\textwidth}
\begin{tikzpicture}[scale=0.35]
\draw(-2.00,5.00)node{\tiny ratio};
\draw(6.67,-1.50)node{\tiny$n$};%
\draw(1.67,-1.50)node{\tiny($d=3$, $s=20$)};%
\draw[line width=1pt](0,0)--(10.20,0);%
\foreach \x/\k in {0.20/300,2.70/400,5.20/500,7.70/600,10.20/700} {
    \draw[line width=1pt](\x,-0.2)--(\x,0);
    \draw(\x,-0.5)circle(0pt)node[black]{\tiny$\k$};
}%
\draw[line width=1pt](0,0)--(0,10.00);%
\foreach \y/\k in {0.00/0,3.33/5,6.67/10,10.00/15}{
    \draw[line width=1pt](-0.2,\y)--(0,\y);
    \draw(-1,\y)circle(0pt)node[black]{\tiny$\k$};
}%
\foreach \y in
{0.67,1.33,2.00,2.67,4.00,4.67,5.33,6.00,7.33,8.00,8.67,9.33}
\draw[line width=1pt](-0.1,\y)--(0,\y);%
\draw[line width=1pt,color=red,mark=o,mark size=3pt,mark
options={color=black}] plot coordinates{(0.20,2.17)(0.70,2.30)
(1.20,2.45)(1.70,2.64)(2.20,2.81)(2.70,2.98)(3.20,3.12)(3.70,3.32)
(4.20,3.54)(4.70,3.62)(5.20,3.84)(5.70,4.09)(6.20,4.23)(6.70,4.45)
(7.20,4.66)(7.70,4.93)(8.20,5.09)(8.70,5.27)(9.20,5.52)(9.70,5.74)
(10.20,5.99)};
\end{tikzpicture}\end{minipage}
\hspace{0.3cm}
\begin{minipage}{0.3\textwidth}
\begin{tikzpicture}[scale=0.35]
\draw(-2.00,5.00)node{\tiny ratio}; \draw(6.67,-1.50)node{\tiny$n$};
\draw(1.67,-1.50)node{\tiny($d=5$, $s=20$)};%
\draw[line width=1pt](0,0)--(10.20,0);%
\foreach \x/\k in{0.20/120,2.70/158,5.20/195,7.70/232,10.20/270}{
    \draw[line width=1pt](\x,-0.2)--(\x,0);
    \draw(\x,-0.5)circle(0pt)node[black]{\tiny$\k$};
}%
\draw[line width=1pt](0,0)--(0,10.00);%
\foreach \y/\k in {0.00/0,3.33/5,6.67/10,10.00/15} {
    \draw[line width=1pt](-0.2,\y)--(0,\y);
    \draw(-1,\y)circle(0pt)node[black]{\tiny$\k$};
}%
\foreach \y in {0.67,1.33,2.00,2.67,4.00,4.67,5.33,
6.00,7.33,8.00,8.67,9.33}%
\draw[line width=1pt](-0.1,\y)--(0,\y);%
\draw[line width=1pt,color=red,mark=o,mark size=3pt, mark
options={color=black}] plot coordinates{(0.20,1.46)
(0.87,1.71)(1.53,2.01)(2.20,2.22)(2.87,2.56)(3.53,2.85)
(4.20,3.24)(4.87,3.68)(5.53,4.18)(6.20,4.87)(6.87,5.19)
(7.53,6.06)(8.20,6.67)(8.87,7.28)(9.53,8.04)(10.20,9.03)};
\end{tikzpicture}\end{minipage}
\hspace{0.3cm}
\begin{minipage}{0.3\textwidth}
\begin{tikzpicture}[scale=0.35]
\draw(-2.00,5.00)node{\tiny ratio};%
\draw(6.67,-1.50)node{\tiny$n$};%
\draw(1.67,-1.50)node{\tiny($d=7$, $s=20$)};%
\draw[line width=1pt](0,0)--(10.20,0); \foreach \x/\k in
{0.20/100,2.70/120,5.20/140,7.70/160,10.20/180}{
    \draw[line width=1pt](\x,-0.2)--(\x,0);
    \draw(\x,-0.5)circle(0pt)node[black]{\tiny$\k$};}%
\draw[line width=1pt](0,0)--(0,10.00); \foreach \y/\k in
{0.00/0,3.33/5,6.67/10,10.00/15} {
    \draw[line width=1pt](-0.2,\y)--(0,\y);
    \draw(-1,\y)circle(0pt)node[black]{\tiny$\k$};}
\foreach \y in {0.67,1.33,2.00,2.67,4.00,4.67,5.33,
6.00,7.33,8.00,8.67,9.33} %
\draw[line width=1pt](-0.1,\y)--(0,\y);%
\draw[line width=1pt,color=red,mark=o,mark size=3pt, mark
options={color=black}] plot coordinates{(0.20,1.02)
(1.45,1.37)(2.70,1.73)(3.95,2.62)(5.20,2.88)(6.45,3.92)
(7.70,4.94)(8.95,5.95)(10.20,7.10)};
\end{tikzpicture}\end{minipage}
\caption{A plot of the ratio between the running time of
\textsf{Hakata-Imai} \cite{HI98} and that of \textsf{Maxima} when
the numbers of strings $d=3,5,7$, the alphabet size $s=4, 20$, and
$n$ is the length of the strings. All strings are uniformly
generated at random.} \label{lcs}
\end{figure}
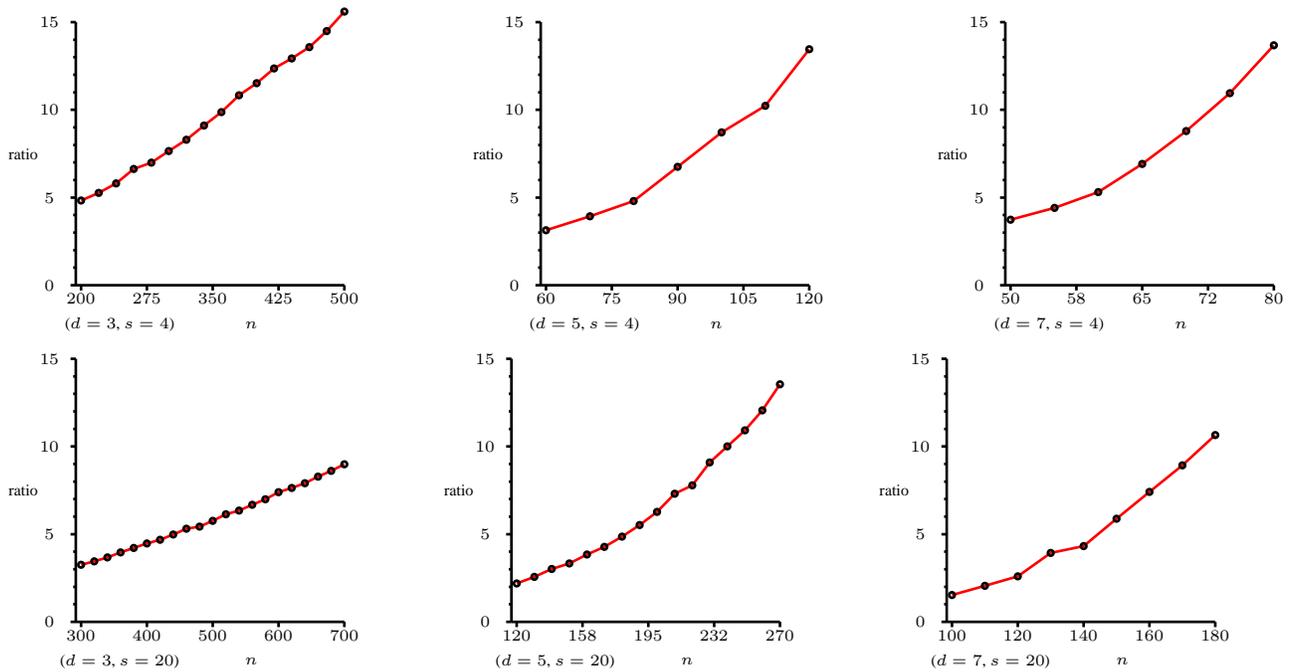

\end{document}